\newcommand{\bbLbrack}{[\kern-0.4em{[}\,}
\newcommand{\bbRbrack}{\,]\kern-0.4em{]}}
\newcommand{\II}{{\boldsymbol{1}}}
\newcommand{\CC}{{\mathbb C}}
\newcommand{\RR}{{\mathbb R}}
\newcommand{\NN}{{\mathbb N}}
\newcommand{\BB}{{\mathscr B}}
\newcommand{\HH}{{\mathscr H}}
\newcommand{\KK}{{\mathscr K}}
\newcommand{\Ac}{{\mathcal A}}
\newcommand{\Bc}{{\mathcal B}}
\newcommand{\OO}{{\mathscr O}}
\newcommand{\xb}{{\boldsymbol{x}}}
\newcommand{\ogth}{{\mathfrak o}}
\newcommand{\tgth}{{\mathfrak t}}
\newcommand{\wgth}{{\mathfrak w}}
\DeclareMathOperator{\diam}{diam}
\DeclareMathOperator{\Tr}{Tr}
\newcommand{\ip}[2]{{\langle #1\mid #2\rangle}}
\newcommand{\Fb}{{\boldsymbol{F}}}
\newcommand{\Ib}{{\boldsymbol{I}}}
\newcommand{\Mb}{{\boldsymbol{M}}}
\newcommand{\Nb}{{\boldsymbol{N}}}
\newcommand{\Pb}{{\boldsymbol{P}}}
\newcommand{\Mc}{{\mathcal{M}}}
\newcommand{\Sc}{{\mathcal{S}}}
\newcommand{\Loc}{{\sf Loc}}
\newcommand{\Alg}{{\sf Alg}}
\newcommand{\CAlgSts}{{\sf C^*\hbox{-}AlgSts}}
\newcommand{\CAlg}{{\sf C^*\hbox{-}Alg}}
\newcommand{\Af}{{\mathscr A}}
\newcommand{\Sf}{{\mathscr S}}
\newcommand{\Tc}{{\mathcal T}}
\newcommand{\Xf}{{\mathscr X}}
\newcommand{\id}{{\rm id}}
\newcommand{\Aut}{{\rm Aut}}
\newcommand{\kin}{{\rm kin}}
\DeclareMathOperator{\Diff}{Diff}
\newcounter{tightenum}
\newenvironment{tightitemize}% 
{\begin{list}{$\bullet$}{\setlength{\itemsep}{0pt}\setlength{\parsep}{0pt}\setlength{\topsep}{0pt}}}%
{\end{list}}
{\begin{list}{(\roman{tightenum})}{\usecounter{tightenum} \setlength{\itemsep}{0pt}\setlength{\parsep}{0pt}\setlength{\topsep}{0pt}}}%
{\end{list}}
\newcommand{\sS}{\leftidx{_*}{}{S}}
\newcommand{\sT}{\leftidx{_*}{}{T}}
\newcommand{\Ngth}{{\mathfrak N}}
\newcommand{\Rgth}{{\mathfrak R}}
\newtheorem{theorem}{Theorem}[section]
\newtheorem{proposition}[theorem]{Proposition}
\newtheorem{definition}[theorem]{Definition}
\newtheorem{lemma}[theorem]{Lemma}
\newtheorem{corollary}[theorem]{Corollary}
\newtheorem{remark}[theorem]{Remark}
\begin{document} 

%%%%%%%%%%%%%%%%%%%%%%%%%%%%%%%%%%%%%%%%%%%%%%%%%%%%%%%%%%%%%%%%%%%%%%
% title, abstract and author information

\title{The split property for locally covariant quantum field theories
in curved spacetime}
\author{Christopher J Fewster\thanks{\tt chris.fewster@york.ac.uk}\\ Department of Mathematics,
                 University of York, \\
                 Heslington,
                 York YO10 5DD, U.K.}
\date{\today}

\maketitle

\begin{abstract}
The split property expresses the way in which  
local regions of spacetime define subsystems of a quantum
field theory. It is known to hold for general
theories in Minkowski space under the hypothesis of nuclearity. 
Here, the split property is discussed for general locally covariant
quantum field theories in arbitrary globally hyperbolic curved spacetimes,
using a spacetime deformation argument to transport the
split property from one spacetime to another.  It is also
shown how states obeying both the split and (partial) Reeh--Schlieder
properties can be constructed, providing standard split
inclusions of certain local von Neumann algebras. Sufficient conditions are given for the theory to admit such states in ultrastatic spacetimes, from which the general case follows.  A number of consequences
are described, including the existence of local generators for
global gauge transformations, and the classification of certain local von Neumann algebras. Similar arguments are applied to the distal
split property and circumstances are exhibited under which distal splitting
implies the full split property. 
\\
\par\noindent {\bf Mathematics Subject Classification (2010)} 81T05, 81T20\\
{\bf Keywords} Split property, Reeh-Schlieder theorem, local covariance
\\[5pt]
{\em Dedicated to the memory of John E Roberts}

\end{abstract}

\section{Introduction}

In relativistic physics, one expects that spacelike separated local
spacetime regions should constitute independent subsystems.
The simplest expression of this in quantum field theory (QFT) is \emph{Einstein causality}, which requires that observables localized in spacelike separated regions  commute and are therefore
commensurable.  Algebraic quantum field theory~\cite{Haag}
offers various strengthened criteria for statistical independence of
observables at spacelike separation (see~\cite{Summers:1990,Summers:2009} 
for reviews) of which the \emph{split property} has turned out to be 
particularly deep and fruitful. For the most part, the split property
has been studied in Minkowski space, while in curved spacetime
results have related to particular linear field theories~\cite{Verch_nucspldua:1993,
DAnHol:2006}. In this paper we establish the split property
in general globally hyperbolic spacetimes, within the framework of locally
covariant QFT~\cite{BrFrVe03} and subject to additional conditions
described below. 

To set the scene, we briefly recall the definition of the split property
in Minkowski space.  In the algebraic framework~\cite{Haag}
one considers a net of $C^*$-algebras $\Ac(O)$ indexed by open bounded regions of Minkowski space. These algebras share a common unit, and
(among other axioms) are isotonous, i.e., $O_1\subset O_2$ implies that $\Ac(O_1)\subset \Ac(O_2)$.
Let $\omega$ be a state on the $C^*$-algebra $\Ac$ generated by all the $\Ac(O)$,
thereby inducing a GNS representation $\pi$ of $\Ac$ on Hilbert space $\HH$
with GNS vector $\Omega$. In this representation we may form local von Neumann algebras
by taking double commutants, $\Rgth(O)=\pi(\Ac(O))''$. Clearly, whenever
$O_1\subset O_2$, there is an inclusion $\Rgth(O_1)\subset \Rgth(O_2)$ of
von Neumann algebras;  following~\cite{DopLon:1984}, the inclusion is said to \emph{split} 
if there is a type $\text{I}$ von Neumann factor $\Ngth$ such that $\Rgth(O_1)\subset\Ngth\subset\Rgth(O_2)$. That is, $\Ngth$ has trivial centre, and is isomorphic as a von Neumann algebra to the algebra
of all bounded operators on some (not necessarily separable) Hilbert space~\cite[Prop.~2.7.19]{BratRob}. 
The state $\omega$ is said to have the \emph{split property} if such inclusions split 
 for all relatively compact $O_1,O_2$ with $\overline{O_1}\subset O_2$. 

The relationship with statistical independence arises as follows. Suppose the
net of local algebras obeys Einstein causality, so that algebras of causally disjoint
regions commute elementwise. If $O_2$ and $O_3$ are causally disjoint
and the inclusion $\Rgth(O_1)\subset\Ngth\subset\Rgth(O_2)$
is split for some $O_1\subset O_2$, then $\Rgth(O_1)$ and $\Rgth(O_3)$ enjoy a high
degree of statistical independence: the algebra they generate is isomorphic to their
$W^*$-tensor product, and thus any normal states $\varphi_1$ and $\varphi_3$ on $\Rgth(O_1)$ and $\Rgth(O_3)$ can be extended
to a normal product state $\varphi$ obeying $\varphi(A_1A_3)=\varphi_1(A_1)\varphi_3(A_3)$ for $A_i\in\Rgth(O_i)$ ($i=1,3$). 

Originally conjectured by Borchers, the split property was first proved for
free fields by Buchholz~\cite{Buc:1974}. Subsequently, it was established
for general models \cite{BucDAnFre:1987}
under suitable hypotheses of \emph{nuclearity}, which controls
the growth of the localized state space with energy. 
As the nuclearity criterion is closely linked to the thermodynamic properties
of the theory~\cite{BucWic:1986,BucJun:1986,BucJun:1989}, it is expected
to hold for many theories of physical interest. In particular, it is  
satisfied by free fields and even for countably many free fields provided
that the spectrum of masses obeys suitable conditions~\cite{BucWic:1986}. 

Our approach to the split property in curved spacetimes is similar in spirit to Sanders' work on the Reeh--Schlieder
property~\cite{Sanders_ReehSchlieder}: the existence of a state with 
the desired properties on the given spacetime is deduced by deforming to a spacetime on which such a state is known (or assumed) to exist. (In the Reeh--Schlieder case, the states obtained are not generally
cyclic for \emph{all} local algebras, and so what is proved is
a partial Reeh--Schlieder property.)
For linear fields, related arguments appear in~\cite{Verch:1993,Dappiaggi:2011} and are also used in the proof of the split property~\cite{Verch_nucspldua:1993,DAnHol:2006}. 
In these cases, the existence of states with the Reeh--Schlieder or split property was proved for ultrastatic spacetimes and used to deduce 
similar results in more general spacetimes.  
A novelty of our specific approach is that we rephrase the deformation arguments for the split and partial Reeh--Schlieder properties into a common language, allowing streamlined proofs running in close analogy to one another. Indeed, we will give a combined result on states obeying both the split and partial Reeh--Schlieder properties, 
thus yielding \emph{standard split inclusions}~\cite{DopLon:1984}.

The paper is structured as follows: in section~\ref{sect:prelim} we describe
the relevant geometrical background, in particular introducing the concept
of a \emph{regular Cauchy pair}, and also recall the main ideas
needed from locally covariant QFT~\cite{BrFrVe03}. Section~\ref{sect:split} contains our main results on the split and Reeh--Schlieder properties. In the latter case, this reproduces results
from~\cite{Sanders_ReehSchlieder}; the interest here is that the proof runs in close analogy to that of the split property, and that the split and Reeh--Schlieder properties can hold simultaneously. We also show that the \emph{distal split property}, in which one demands split inclusions only for situations
in which the outer region is sufficiently larger than the inner, is also
amenable to deformation arguments; moreover, we give results to 
show that the full split property follows from a suitable  
distal split condition for models which have state spaces compatible
with local quasiequivalence and the timeslice condition. It follows that
models that obey a distal split condition but not the full split property (see, e.g.,~\cite[Thm 4.3]{DAnDopFreLon:1987}) cannot admit such state spaces. 

Section~\ref{sect:ultrastatic} describes sufficient conditions for
the existence of states with the Reeh--Schlieder and split
properties in connected ultrastatic spacetimes. As every connected spacetime in our
category can be deformed to such a spacetime, this 
establishes conditions for our results to hold in generality.  
Nonetheless, our deformation arguments hold even for disconnected
spacetimes and we given an example of a state over a disconnected
spacetime with the (full) Reeh--Schlieder property.

By way of outlook, a number of applications of the split property
are described in Section~\ref{sect:split}.
These include the statistical independence at spacelike separation,
the existence of local generators of global gauge transformations 
(established in the Minkowski space case in~\cite{DopLon:1983})
and the identification of local algebras as the unique hyperfinite type $\text{III}_1$ 
factor, up to a tensor product with an abelian algebra.  
However there are numerous additional directions that can be explored, and
in general the split property brings a much more detailed set of tools to bear on the
general analysis of QFT in curved spacetimes than has so far been available.

\section{Preliminaries}\label{sect:prelim}

\subsection{The category $\Loc$ and spacetime deformation}

Locally covariant quantum field theory~\cite{BrFrVe03} describes
QFT on a category of globally hyperbolic spacetimes $\Loc$. 
Fixing a spacetime dimension $n\ge 2$, 
objects of $\Loc$ are quadruples $\Mb=(\Mc,g,\ogth,\tgth)$
where $\Mc$ is a smooth paracompact orientable nonempty 
$n$-manifold with finitely many connected components, 
$g$ is a smooth time-orientable metric of signature $+-\cdots-$
on $\Mc$, $\ogth$ and $\tgth$ are choices of orientation and
time-orientation respectively,\footnote{The orientation (resp., time-orientation) is conveniently represented as a choice of one of the
connected components of the nowhere-zero smooth $n$-forms (resp.,
$g$-timelike $1$-forms) on $\Mc$.} so that the spacetime
$\Mb$ is globally hyperbolic. That is, $\Mb$ has no closed causal
curves and the intersections $J^+_\Mb(p)\cap J^-_\Mb(q)$
of the causal future of $p$ with the causal past of $q$ is
compact (including the possibility of being empty) for any pair
of points $p,q\in\Mc$.   
A morphism between two objects $\Mb=(\Mc,g,\ogth,\tgth)$
and $\Mb'=(\Mc',g',\ogth',\tgth')$ of $\Loc$
is any smooth embedding $\psi:\Mc\to\Mc'$ 
that is isometric, preserves the (time)orientation (i.e., $\psi^*g'=g$, $\psi^*\ogth'=\ogth$, $\psi^*\tgth'=\tgth$) and has
a causally convex image. If the image contains a Cauchy surface
of $\Mb'$, $\psi$ will be described as a \emph{Cauchy morphism}. 

We will often consider open causally convex subsets 
of $\Mb$ with finitely many mutually causally disjoint components; the
set of all such sets will be denoted $\OO(\Mb)$. 
Suppose $\Mb=(\Mc,g,\ogth,\tgth)\in\Loc$, and that $O\in\OO(\Mb)$ is nonempty.
Then $\Mb|_O:=(O,g|_O,\ogth|_O,\tgth|_O)$, i.e., $O$ regarded as a spacetime
in its own right with the induced metric and causal structures from $\Mb$, is an object of $\Loc$,
and the inclusion map $O\hookrightarrow\Mc$ induces a morphism $\iota_{\Mb;O}:\Mb|_O\to\Mb$.
 
There is a useful canonical form for objects of $\Loc$. 
Objects of the form $(\RR\times\Sigma,g, \tgth\wedge\wgth,\tgth)$
where (a) $(\Sigma,\wgth)$ is an oriented $(n-1)$-manifold,
(b) $dt$ is future-directed according to $\tgth$, where
$t$ is the coordinate corresponding to the first factor of 
the Cartesian product $\RR\times\Sigma$, and (c) the metric splits as
\begin{equation}\label{eq:split_metric}
g = \beta  dt\otimes dt -h_t,
\end{equation}
where $\beta\in C^\infty(\RR\times\Sigma)$ is strictly positive and $t\mapsto h_t$ is a smooth choice of (smooth) Riemannian metrics on $\Sigma$,
are said to be in \emph{standard form}. Every leaf $\{t\}\times\Sigma$
is a smooth spacelike Cauchy surface of the spacetime.
The structure theorem for $\Loc$ \cite[\S 2.1]{FewVer:dynloc_theory} is:
\begin{proposition}\label{prop:BS}
Supposing that $\Mb\in\Loc$, let $\Sigma$ be a 
smooth spacelike Cauchy surface of $\Mb$ with induced orientation $\wgth$, and let $t_*\in\RR$. Then there
is a $\Loc$-object  $\Mb_{\text{st}}=(\RR\times\Sigma,g, \tgth\wedge\wgth,\tgth)$ in standard form 
and an isomorphism $\rho:\Mb_{\text{st}}\to\Mb$ in $\Loc$ such that
each $\{t\}\times\Sigma$ is a smooth spacelike Cauchy surface of $\Mb_{\text{st}}$, and $\rho(t_*,\cdot)$ is the inclusion of $\Sigma$ in $\Mb$.
\end{proposition}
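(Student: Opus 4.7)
The plan is to derive the statement from the strengthened orthogonal splitting theorem of Bernal and Sánchez. Given $\Mb=(\Mc,g,\ogth,\tgth)\in\Loc$ and the smooth spacelike Cauchy surface $\Sigma$, that theorem furnishes a smooth Cauchy temporal function $\tau:\Mc\to\RR$ with $\tau^{-1}(0)=\Sigma$, together with a diffeomorphism $\tilde\rho:\RR\times\Sigma\to\Mc$ carrying $\{s\}\times\Sigma$ to $\tau^{-1}(s)$ for every $s$, and under which the pulled-back metric assumes the warped form
\begin{equation*}
\tilde\rho^* g = \beta\, dt\otimes dt - h_t,
\end{equation*}
with $\beta$ smooth and strictly positive and $t\mapsto h_t$ a smooth family of Riemannian metrics on $\Sigma$; moreover each slice $\tau^{-1}(s)$ is a smooth spacelike Cauchy surface of $\Mb$. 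Precomposing $\tilde\rho$ with the translation $(t,\sigma)\mapsto(t-t_*,\sigma)$ yields a diffeomorphism $\rho:\RR\times\Sigma\to\Mc$ that identifies $\{t_*\}\times\Sigma$ with $\Sigma$ as the identity map, while preserving the warped form of the metric after a harmless reindexing of $\beta$ and $h_t$.

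Equip $\RR\times\Sigma$ with the metric $\rho^*g$, the orientation $\ogth_{\text{st}}:=\rho^*\ogth$, and the time-orientation $\tgth_{\text{st}}:=\rho^*\tgth$, and set $\Mb_{\text{st}}:=(\RR\times\Sigma,\rho^*g,\ogth_{\text{st}},\tgth_{\text{st}})$. I would then verify the three defining conditions of standard form and the identity $\ogth_{\text{st}}=\tgth_{\text{st}}\wedge\wgth$. Item (c) is precisely the Bernal--Sánchez output. Item (b), that $dt$ is future-directed for $\tgth_{\text{st}}$, follows because $dt$ corresponds under $\rho$ to $d\tau$, and the gradient of a Cauchy temporal function is future-directed timelike, placing $d\tau$ in the connected component of timelike $1$-forms that represents $\tgth$. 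For the orientation, the induced $\wgth$ on $\Sigma$ is characterised pointwise by $\tgth\wedge\wgth=\ogth$ under the transverse decomposition along $\Sigma$; pulling back through $\rho$ yields $\tgth_{\text{st}}\wedge\wgth=\ogth_{\text{st}}$ on the slice $\{t_*\}\times\Sigma$, and since an orientation on a connected manifold is fixed by its value at a single point, the equality propagates to all of $\Mb_{\text{st}}$.

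The map $\rho$ is by construction a smooth isometric diffeomorphism preserving orientation and time-orientation, with causally convex image $\Mc$, and is therefore a $\Loc$-isomorphism. Each slice $\{t\}\times\Sigma$ is sent by $\rho$ to the Cauchy surface $\tau^{-1}(t-t_*)$ of $\Mb$, so $\{t\}\times\Sigma$ is a smooth spacelike Cauchy surface of $\Mb_{\text{st}}$; and $\rho(t_*,\cdot)$ is the inclusion of $\Sigma$ in $\Mb$ by construction. The main obstacle is the Bernal--Sánchez splitting itself, a nontrivial analytic result that is merely invoked here. If $\Mc$ has several connected components, the splitting is applied separately on each (where the intersection with $\Sigma$ is the relevant Cauchy surface), and the results are assembled with a common coordinate $t$ and uniform shift by $t_*$, so that $\beta$ and the family $h_t$ make sense globally on $\RR\times\Sigma$.
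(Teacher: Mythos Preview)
Your proposal is correct and follows the same approach the paper indicates: the paper does not actually give a proof of this proposition, but simply states that it ``is a slight elaboration of results due to Bernal and S\'anchez (see particularly, [Thm~1.2, Bernal--S\'anchez 2006] and [Thm~2.4, Bernal--S\'anchez 2005])'' and refers to \S2.1 of Fewster--Verch for details. Your argument carries out precisely that elaboration---invoking the Bernal--S\'anchez Cauchy temporal function through the prescribed hypersurface, shifting by $t_*$, and verifying the orientation and time-orientation compatibilities---so there is nothing to add.
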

Here, the induced orientation $\wgth$ of the Cauchy surface
$\Sigma$ in $\Mb=(\Mc,g,\ogth,\tgth)$ is the unique orientation
such that $\ogth|_\Sigma=\tgth|_\Sigma\wedge\wgth$. Proposition~\ref{prop:BS} 
is a slight elaboration of results due to 
Bernal and S\'anchez (see particularly, \cite[Thm 1.2]{Bernal:2005qf} and \cite[Thm 2.4]{Bernal:2004gm}), which were previously long-standing folk-theorems.  

We will occasionally make use of a Riemannian metric $k$ associated 
with any smooth spacelike Cauchy surface $\Sigma$ of $\Mb\in\Loc$,
uniquely defined so that $\sqrt{k_\sigma(u,u)}n|_{\iota(\sigma)}+\iota_* u$ is a future-directed null vector for every $u\in T_\sigma\Sigma$, where
$n$ is the future-directed unit normal vector field to $\Sigma$ and
$\iota:\Sigma\to\Mb$ is the inclusion map. 
If $\Mb$ is in standard form and
$\Sigma$ is the hypersurface of constant $t$, then $k=\beta^{-1}h_t$, 
of course. The metric $k$ measures spatial distances in terms of light travel times in the rest frame defined by $n$ and is an instantaneous version of the optical metric defined in static spacetimes~\cite{GibbonsPerry:1978}. Accordingly, we refer to $k$
as the \emph{instantaneous optical metric}.\footnote{In~\cite{GibbonsPerry:1978}, the optical metric is
defined for static spacetimes, on spatial sections orthogonal to a timelike Killing vector: if the spacetime metric takes form~\eqref{eq:split_metric} with $h_t\equiv h$ and $\beta$ independent of $t$, then the optical metric is precisely 
$\beta^{-1}h$, coinciding with our instantaneous optical metric. In these
circumstances the geodesics of the optical metric are precisely the spatial projections of null geodesics in the spacetime; this property is not
generally true of the instantaneous optical metric, which, however, is a more general concept.}  

Methods for deforming one globally hyperbolic spacetime
into another go back to the work of Fulling, Narcowich and Wald~\cite{FullingNarcowichWald}, in which the existence
of Hadamard states on ultrastatic spacetimes was used to deduce
their existence on general globally hyperbolic spacetimes.  
As first recognized in~\cite{Verch01}, the same idea can be used to great effect in 
locally covariant QFT. The fundamental 
spacetime deformation result can be formulated as follows
(see~\cite[Prop.~2.4]{FewVer:dynloc_theory}).
\begin{proposition}\label{prop:Cauchy_chain}
Two spacetimes $\Mb$, $\Nb$ in $\Loc$ have oriented-diffeomorphic Cauchy surfaces
if and only if there exists a chain of Cauchy morphisms in $\Loc$ forming a diagram
\begin{equation} \label{eq:Cauchy_chain}
\Mb\xleftarrow{\alpha} \Pb \xrightarrow{\beta} \Ib \xleftarrow{\gamma} \Fb \xrightarrow{\delta} \Nb.
\end{equation} 
\end{proposition}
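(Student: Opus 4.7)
The plan is to prove the two implications separately, with the backward direction following from basic properties of Cauchy morphisms and the forward direction requiring a spacetime interpolation in the style of Fulling, Narcowich and Wald. For the backward implication, note that any Cauchy morphism $\psi$ in $\Loc$ takes a smooth spacelike Cauchy surface of its domain to a smooth spacelike Cauchy surface of its codomain: the image is $C^\infty$-embedded, spacelike and achronal, and every inextendible causal curve in the codomain meets the (causally convex) Cauchy image of $\psi$, where it is the image of an inextendible causal curve in the domain intersecting the original Cauchy surface exactly once. Since $\Loc$-morphisms preserve orientation and time-orientation, the induced orientations on Cauchy surfaces are preserved. Composing the resulting oriented diffeomorphisms along the chain yields one between Cauchy surfaces of $\Mb$ and $\Nb$.

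For the forward implication, let $\phi$ be an oriented diffeomorphism between chosen Cauchy surfaces of $\Mb$ and $\Nb$. By Proposition~\ref{prop:BS}, each spacetime may be put in standard form; using $\phi$ to identify the two Cauchy surfaces as oriented manifolds, we may assume that $\Mb$ and $\Nb$ share the common underlying manifold $\RR\times\Sigma$, orientation $\tgth\wedge\wgth$ and time-orientation $\tgth$, with metrics of split form $g_M=\beta_M\,dt\otimes dt-h_M$ and $g_N=\beta_N\,dt\otimes dt-h_N$. Fix a smooth cutoff $\chi:\RR\to[0,1]$ with $\chi=0$ on $(-\infty,-1]$ and $\chi=1$ on $[1,\infty)$, and set
\[
g_I=\beta_I\,dt\otimes dt-h_I,\qquad \beta_I=(1-\chi)\beta_M+\chi\beta_N,\qquad h_I(t)=(1-\chi(t))h_M(t)+\chi(t)h_N(t).
\]
Since $\beta_I>0$ and $h_I(t)$ is Riemannian as a convex combination, $g_I$ has the same split form and agrees with $g_M$ on $(-\infty,-1]\times\Sigma$ and with $g_N$ on $[1,\infty)\times\Sigma$.

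Set $\Ib:=(\RR\times\Sigma,g_I,\tgth\wedge\wgth,\tgth)$, $\Pb:=\Mb|_{(-\infty,-1)\times\Sigma}$ and $\Fb:=\Nb|_{(1,\infty)\times\Sigma}$. Both $\Pb$ and $\Fb$ are open and causally convex (by monotonicity of $t$ along causal curves) and contain spacelike Cauchy surfaces of their ambient spacetimes, for instance $\{-2\}\times\Sigma$ and $\{2\}\times\Sigma$, so the inclusions $\alpha:\Pb\hookrightarrow\Mb$ and $\delta:\Fb\hookrightarrow\Nb$ are Cauchy morphisms. Because $g_I$ coincides with $g_M$ on $(-\infty,-1)\times\Sigma$ and with $g_N$ on $(1,\infty)\times\Sigma$, the identity maps on these underlying open sets are isometries onto causally convex subregions of $\Ib$ containing Cauchy surfaces of $\Ib$, giving Cauchy morphisms $\beta:\Pb\hookrightarrow\Ib$ and $\gamma:\Fb\hookrightarrow\Ib$. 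This assembles the required chain.

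The main obstacle is verifying that $\Ib$ belongs to $\Loc$, i.e., that $(\RR\times\Sigma,g_I)$ is globally hyperbolic with each leaf $\{t\}\times\Sigma$ a Cauchy surface. The split form immediately shows that $dt$ is a smooth time function and the leaves are acausal spacelike hypersurfaces, but proving that every inextendible causal curve meets each leaf exactly once requires controlling causal propagation through the transition slab $[-1,1]\times\Sigma$. Outside the slab $g_I$ equals a globally hyperbolic metric, so causal curves behave as in $\Mb$ or $\Nb$; inside the slab a compactness argument, exploiting that the $g_I$-causal cones are controlled pointwise by the $g_M$- and $g_N$-cones (a vector that is both $g_M$- and $g_N$-causal is automatically $g_I$-causal, and conversely $dt$ is uniformly timelike) prevents an inextendible causal curve from terminating in the slab. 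Together these observations yield global hyperbolicity of $\Ib$, via arguments analogous to those of~\cite{FullingNarcowichWald} and~\cite{Verch01}.
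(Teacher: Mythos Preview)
Your treatment of the backward implication is fine. The forward implication, however, has a genuine gap in the verification that $\Ib$ is globally hyperbolic, and the gap stems from the particular interpolating metric you chose.

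The simple convex combination $g_I=(1-\chi)g_M+\chi g_N$ (in the split components) does \emph{not} come with any useful control on its causal cones. Your observation that ``a vector that is both $g_M$- and $g_N$-causal is automatically $g_I$-causal'' is correct but points the wrong way: it says the $g_I$-cone \emph{contains} the intersection of the $g_M$- and $g_N$-cones, whereas what you need in order to exploit the global hyperbolicity of $\Mb$ or $\Nb$ is that the $g_I$-cone is \emph{contained in} one of them. Without such an inclusion, a $g_I$-causal curve need not be $g_M$- or $g_N$-causal anywhere, and there is no ``compactness argument'' to appeal to in the slab when $\Sigma$ is noncompact: the fact that $dt$ is $g_I$-timelike only gives achronality of the leaves, not that inextendible causal curves meet them all. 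So as written, the global hyperbolicity of $\Ib$ is unproved.

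The paper's construction (following \cite{FullingNarcowichWald} and \cite[Prop.~2.4]{FewVer:dynloc_theory}) is designed precisely to avoid this problem. One does \emph{not} take a convex combination; instead one builds $g$ so that on $(-\infty,t_2')\times\Sigma$ every $g$-timelike vector is $g_M$-timelike, and on $(t_1',\infty)\times\Sigma$ every $g$-timelike vector is $g_N$-timelike, with $t_1'<t_2'$ so the two regions overlap. (Roughly: first squeeze the cones to be narrower than both, then open them up to the $g_N$-cones.) With that cone-narrowing, an inextendible $g$-timelike curve is, in each half, an inextendible timelike curve for one of the original globally hyperbolic metrics and therefore hits every leaf $\{t\}\times\Sigma$; global hyperbolicity of $\Ib$ is then immediate. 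This cone-narrowing property is also what the paper later uses to transfer the preorder on regular Cauchy pairs (e.g.\ to conclude $(S_*,T_*)\prec_\Ib(S_\Mb,T_\Mb)$ from $(S_*,T_*)\prec_\Mb(S_\Mb,T_\Mb)$), something your convex combination would not provide.
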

\begin{proof}  
For later use, we sketch some details needed in the forward implication; 
see~\cite[Prop.~2.4]{FewVer:dynloc_theory} for the full proof.
Assume without loss that $\Mb$ and $\Nb$ are in standard form with $\Mb=(\RR\times\Sigma,g_1,\ogth,\tgth_1)$ and
$\Nb=(\RR\times\Sigma,g_2,\ogth,\tgth_2)$, where $\ogth=\tgth_1\wedge\wgth=\tgth_2\wedge\wgth$ for some
orientation $\wgth$ of $\Sigma$. 
 
Given any reals $t_1<t_1'<t_2'<t_2$, one may construct a metric
$g$ of the form~\eqref{eq:split_metric}, so that 
\begin{tightitemize}
\item $g=g_1$ on $P=(-\infty,t_1)\times\Sigma$ and $g=g_2$ on $F=(t_2,\infty)\times\Sigma$
\item on $(-\infty,t_2')\times\Sigma$ every $g$-timelike vector is $g_1$-timelike
\item on $(t_1',\infty)\times\Sigma$ every $g$-timelike vector is $g_2$-timelike.
\end{tightitemize}
The idea for constructing such a metric is described in~\cite{FullingNarcowichWald};
the argument is simplified and given in more detail in~\cite[Prop.~2.4]{FewVer:dynloc_theory}. Choosing $\tgth$ so that $dt$ is future-directed, the spacetime $\Ib:=(\RR\times\Sigma,g,\ogth,\tgth)$ is globally hyperbolic, because every inextendible $g$-timelike curve intersects each surface of constant $t$ precisely once. In addition, we set
$\Pb:=\Mb|_{P}$ and $\Fb:=\Nb|_{F}$, whereupon the inclusions of $F$ and $P$ into $\RR\times\Sigma$
induce the required Cauchy morphisms in \eqref{eq:Cauchy_chain}. 
\end{proof}

\subsection{Regular Cauchy pairs}

We will be interested in some particular subsets of Cauchy surfaces defined as follows.  

\begin{definition} Let $\Mb\in\Loc$. A \emph{regular Cauchy pair} $(S,T)$ in $\Mb$
is an ordered pair of subsets of $\Mb$, that are nonempty, open, relatively compact subsets of a common smooth spacelike Cauchy surface of $\Mb$ in which $\overline{T}$ has nonempty complement, and so that $\overline{S}\subset T$.
There is a preorder on regular Cauchy pairs so that $(S_1,T_1)\prec (S_2,T_2)$
if and only if $S_2\subset D_\Mb(S_1)$ and $T_1\subset D_\Mb(T_2)$.\footnote{
The preorder is not a partial order, because $(S_1,T_1)\prec (S_2,T_2)\prec (S_1,T_1)$ implies $D_\Mb(S_1)=D_\Mb(S_2)$ and $D_\Mb(T_1)=D_\Mb(T_2)$, but not necessarily  $S_1=S_2$ and $T_1=T_2$.}
\end{definition}
These conditions ensure that $D_\Mb(S)$ and $D_\Mb(T)$ are 
open and casually convex, and hence elements of $\OO(\Mb)$. 
Here, for any subset $U$ of $\Mb$, $D_\Mb(U)$ denotes the
Cauchy development, consisting of all points $p$ in $\Mb$ with the property that
all inextendible piecewise-smooth causal curves through $p$ intersect $U$.
The preorder $\prec$ is illustrated in Figure~\ref{fig:preorder}.

\begin{figure}
\tdplotsetmaincoords{75}{90}
\pgfmathsetmacro{\rvec}{.8}
\pgfmathsetmacro{\thetavec}{15}
\pgfmathsetmacro{\phivec}{60}
\begin{center}
\begin{tikzpicture}[scale=5,tdplot_main_coords]
\coordinate (O) at (0,0,0);
\coordinate (Q) at (0,0,0.4);
\tdplotdrawarc{(O)}{0.35}{0}{360}{anchor=north}{}
\tdplotdrawarc{(O)}{0.5}{0}{360}{anchor=north}{}
\tdplotdrawarc[dashed]{(O)}{0.6}{0}{360}{anchor=north}{}
\tdplotdrawarc{(Q)}{0.17}{0}{360}{anchor=north}{}
\tdplotdrawarc[dotted]{(Q)}{0.25}{0}{360}{anchor=north}{}
\tdplotdrawarc{(Q)}{0.7}{0}{360}{anchor=north}{}
\draw[dotted] (0,0.35,0) -- (0,0.25,0.4);
\draw[dotted] (0,-0.35,0) -- (0,-0.25,0.4);
\draw[dashed] (0,0.6,0) -- (0,0.7,0.4);
\draw[dashed] (0,-0.6,0) -- (0,-0.7,0.4);
\node at (0,0.0,0.0) {$S_1$};
\node at (0,0.425,0) {$T_1$};
\node at (0,0.55,0.4) {$T_2$};
\node at (0,0.0,0.4) {$S_2$};
\end{tikzpicture}
\end{center}
\caption{\small Regular Cauchy pairs with $(S_1,T_1)\prec (S_2,T_2)$. Dotted (resp., dashed) lines
indicate relevant portions of $D_\Mb(S_1)$ (resp., $D_\Mb(T_2)$).}
\label{fig:preorder}
\end{figure}
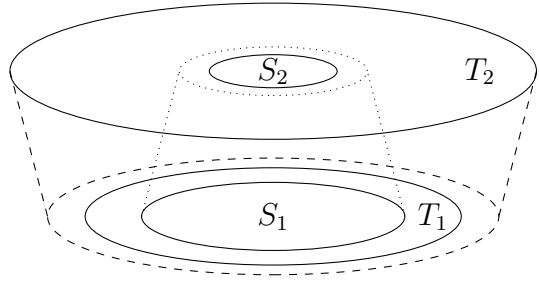
The following lemmas give the properties of regular Cauchy pairs that will be needed. The first is elementary: 
\begin{lemma} \label{lem:Cauchypairs}
Let $\psi:\Mb\to\Nb$ be a Cauchy morphism. Then a pair of subsets $(S,T)$ of $\Mb$ is
a regular Cauchy pair if and only if $(\psi(S),\psi(T))$ is a regular Cauchy pair for $\Nb$ and $\overline{\psi(T)}\subset \psi(\Mb)$. 
\end{lemma}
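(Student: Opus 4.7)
The plan is to verify both implications directly, leveraging two standard geometric facts: (a) a Cauchy morphism $\psi:\Mb\to\Nb$ maps any Cauchy surface of $\Mb$ to a Cauchy surface of $\Nb$; and (b) by a Bernal--S\'anchez extension theorem, any compact smooth spacelike acausal hypersurface-with-boundary in a globally hyperbolic spacetime can be enlarged to a smooth spacelike Cauchy surface of the ambient spacetime.

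For the forward direction, I suppose $(S,T)$ is a regular Cauchy pair on a Cauchy surface $\Sigma$ of $\Mb$. By (a), $\psi(\Sigma)$ is a Cauchy surface of $\Nb$. Because $\psi$ is a diffeomorphism onto its open image, it sends the open, nonempty, relatively compact subsets $S,T$ of $\Sigma$ to subsets of $\psi(\Sigma)$ with the same properties, and preserves closures: $\overline{\psi(S)}=\psi(\overline{S})\subset \psi(T)$ and $\psi(\Sigma)\setminus\overline{\psi(T)}=\psi(\Sigma\setminus\overline{T})$ is nonempty. The containment $\overline{\psi(T)}=\psi(\overline{T})\subset\psi(\Sigma)\subset\psi(\Mb)$ is immediate.

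For the reverse, assume $(\psi(S),\psi(T))$ is a regular Cauchy pair on a Cauchy surface $\Sigma'$ of $\Nb$ and that $\overline{\psi(T)}\subset\psi(\Mb)$. Since $\Sigma'$ is closed in $\Nb$, the closure $\overline{\psi(T)}$ is compact in $\Sigma'$, and by hypothesis lies in the open subset $\Sigma'\cap\psi(\Mb)$ of $\Sigma'$. Pulling back, $\Sigma_M:=\psi^{-1}(\Sigma'\cap\psi(\Mb))$ is a smooth spacelike acausal hypersurface of $\Mb$ containing $S$ and $T$, and $\overline{T}=\psi^{-1}(\overline{\psi(T)})$ is a compact subset of $\Sigma_M$. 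Applying (b) to a suitable compact neighborhood of $\overline{T}$ in $\Sigma_M$, I extend it to a smooth spacelike Cauchy surface $\Sigma$ of $\Mb$ that agrees with $\Sigma_M$ on a neighborhood of $\overline{T}$ (so $S,T$ are open in $\Sigma$). The remaining properties of $(S,T)$ as a regular Cauchy pair in $\Sigma$ transfer from those of $(\psi(S),\psi(T))$ in $\Sigma'$ via $\psi^{-1}$, and $\Sigma\setminus\overline{T}^\Sigma$ is nonempty because $\overline{T}$ is a hypersurface-with-boundary strictly contained in the boundaryless Cauchy surface $\Sigma$.

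The main obstacle is fact (a). Its standard proof uses the causal convexity of $\psi(\Mb)$: any inextendible timelike curve in $\Nb$ must meet the Cauchy surface contained in $\psi(\Mb)$ guaranteed by the Cauchy morphism property, and by causal convexity the preimage of $\psi(\Mb)$ on the curve is a single open interval along which $\psi^{-1}\circ\gamma$ is timelike and inextendible in $\Mb$ (since approach to the endpoints forces $\gamma$ to exit $\psi(\Mb)$), and therefore crosses $\Sigma$. Achronality of $\psi(\Sigma)$ then gives uniqueness of the crossing. Fact (b) is invoked as a black box.
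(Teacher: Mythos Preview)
Your argument is correct and follows essentially the same route as the paper: both directions hinge on the fact that Cauchy morphisms carry Cauchy surfaces to Cauchy surfaces, and the reverse direction is handled by excising a compact spacelike acausal piece containing $\overline{T}$ (equivalently $\overline{\psi(T)}$) and invoking the Bernal--S\'anchez extension theorem to complete it to a Cauchy surface. The only substantive difference is cosmetic: the paper carries out the excision in $\Nb$ (inside $\psi(\Mb)$) before pulling back, whereas you pull back first and excise in $\Mb$; these are equivalent.

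Two minor technical points worth tightening. First, Bernal--S\'anchez requires a compact acausal spacelike submanifold \emph{with smooth boundary}, so your ``suitable compact neighborhood of $\overline{T}$ in $\Sigma_M$'' needs a word of justification; the paper handles this cleanly by choosing a bump function $\chi$ supported in $\Sigma'\cap\psi(\Mb)$ with $\chi\equiv 1$ on $\psi(T)$ and taking $\chi^{-1}([\alpha,\infty))$ for a regular value $\alpha\in(0,1)$. Second, your final sentence asserts that $\overline{T}$ is a hypersurface-with-boundary, which need not be true ($T$ is merely open and relatively compact). The conclusion $\Sigma\setminus\overline{T}\neq\emptyset$ is nonetheless correct, since $\overline{T}$ lies in the interior of the compact piece you extended, and that piece is a proper subset of the boundaryless $\Sigma$.
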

\begin{proof}
The forward direction holds because the image of a Cauchy surface under a Cauchy morphism
is again a Cauchy surface \cite[Lem.~A.2]{FewVer:dynloc_theory}. 
In the reverse direction, 
similar arguments show that the same is true for pre-images, provided
that the Cauchy surface is completely contained in $\psi(\Mb)$. 
The remaining task is thus to show that $\psi(T)$ lies in at least one
smooth spacelike Cauchy surface contained in $\psi(\Mb)$. Let $\Sigma$ be any smooth spacelike Cauchy surface of $\Nb$ containing $\psi(T)$.
It can happen that $\Sigma$ leaves $\psi(\Mb)$.\footnote{I am grateful to 
Ko Sanders for pointing out this possibility, which was missed in an earlier version.} However, 
as $\overline{\psi(T)}\subset\psi(\Mb)$, there exists a compactly 
supported smooth function $\chi:\Sigma\to\RR$ such that $0\le \chi\le 1$, 
$\chi\equiv 1$ on $\psi(T)$ and $\chi$ is supported on the portion of
$\Sigma$ within $\psi(\Mb)$. Taking any regular value $\alpha\in(0,1)$
of $\chi$, the preimage $\chi^{-1}([\alpha,\infty))$ is a compact submanifold-with-boundary of $\Sigma$ (see, e.g., \cite[\S2]{Milnor_top_diff_view}). Thus, it is also a spacelike and acausal compact codimension-$1$ submanifold-with-boundary of $\psi(\Mb)$ and can
therefore be extended to a smooth spacelike Cauchy surface contained
in $\psi(\Mb)$ \cite[Thm~1.1]{Bernal:2005qf}, which necessarily contains
$\psi(T)$.
\end{proof}  
The next two results indicate the extent to which ordered regular Cauchy pairs may be found in nearby Cauchy surfaces. We use two pieces
of notation: first, when a spacetime is presented in standard form with manifold
$\RR\times\Sigma$, we denote any regular Cauchy pair of the form
$(\{t\}\times S,\{t\}\times T)$ by $(S,T)_t$ for brevity; second, 
the notation $A\Subset B$ indicates that $\overline{A}$ is compact and contained in $B$. 
\begin{lemma}\label{lem:lightspeed}
Suppose that $\Mb$ takes standard form with underlying manifold $\RR\times\Sigma$, 
and that $T$ is an open relatively compact subset of $\Sigma$ with nonempty exterior. 
Let $t_*\in\RR$ and let $B(U,\delta)$ denote the open ball of radius $\delta$ about 
$U\subset \Sigma$ with respect to the instantaneous optical metric on $\{t_*\}\times\Sigma$ induced by $\Mb$. 
For all $\delta>0$ such that $B(T,\delta)$ is relatively compact with nonempty exterior,\footnote{The existence of a relatively compact $\delta$-ball
about $T$ follows from the existence of a compact exhaustion of $\Sigma$~\cite[Prop.~4.76]{Lee:topman}, given that $T$ has nonempty exterior.} there exists $\epsilon>0$ such that $\{t\}\times T
\subset D_\Mb(\{t'\}\times B(T,\delta))$ provided that $t,t'\in(t_*-\epsilon,t_*+\epsilon)$. Further, suppose that
$S\subset\Sigma$ is open and relatively compact
with $B(S,\delta) \Subset T$, then 
$(B(S,\delta),T)_t\prec (S,B(T,\delta))_{t'}$ 
for any $t,t'\in (t_*-\epsilon,t_*+\epsilon)$. 
\end{lemma}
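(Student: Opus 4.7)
The plan is to exploit the operational meaning of the instantaneous optical metric: in the rest frame defined by the unit normal, a future-directed causal curve cannot cover more spatial $k$-distance than the elapsed coordinate time. In standard form, $dt$ pairs strictly positively with every future-directed nonzero causal vector (since the slices $\{\tau\}\times\Sigma$ are spacelike Cauchy surfaces), so $t$ is strictly monotone along every inextendible causal curve. I therefore parameterize such a curve by the global coordinate as $\tau\mapsto(\tau,x(\tau))$; the causality inequality $\beta\dot t^2\ge h_\tau(\dot x,\dot x)$ then reduces to $k_\tau(\dot x,\dot x)\le 1$, where $k_\tau=\beta^{-1}(\tau,\cdot)h_\tau$ is the instantaneous optical metric of the slice at time $\tau$. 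In words, the $k_\tau$-speed of the spatial projection is at most unity.

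To convert speed bounds in the moving metric $k_\tau$ into bounds in the fixed metric $k:=k_{t_*}$, I first fix a compact $K\subset\Sigma$ with $\overline{B(T,\delta)}\subset\inte(K)$, which is available because $B(T,\delta)$ is relatively compact with nonempty exterior. By smoothness of $\beta$ and of $t\mapsto h_t$, I can choose a small $\eta>0$ and an $\epsilon_0>0$ so that the pointwise comparison $(1-\eta)k\le k_\tau\le(1+\eta)k$ holds uniformly on $K$ for $|\tau-t_*|\le\epsilon_0$. Shrinking $\epsilon_0$ further, I can arrange $2\epsilon_0(1-\eta)^{-1/2}<\delta$ and take the resulting value as the sought $\epsilon$.

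The central step is to show that for $t,t'\in(t_*-\epsilon,t_*+\epsilon)$, every inextendible piecewise smooth causal curve through any $(t,\sigma)\in\{t\}\times T$ meets $\{t'\}\times B(T,\delta)$. By global hyperbolicity such a curve meets the Cauchy surface $\{t'\}\times\Sigma$ at a unique point $(t',\sigma')$, so the task reduces to $\sigma'\in B(T,\delta)$. Parameterizing the segment between $(t,\sigma)$ and $(t',\sigma')$ as $(\tau,x(\tau))$, the causality bound and the metric comparison give $k(\dot x,\dot x)\le(1-\eta)^{-1}$ \emph{provided} $x(\tau)\in K$. I expect the main obstacle to lie precisely here: the speed bound is conditional on staying in $K$, and I need the unconditional statement that it does. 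I handle this by a bootstrap: let $A$ consist of those $\tau$ for which $x$ maps the closed interval between $\tau$ and $t$ into $\overline{B(T,\delta)}$; then $A$ is a nonempty closed subinterval of $[\min(t,t'),t]$ containing $t$, and on $A$ the speed bound integrates to $d_k(x(\tau),T)\le(1-\eta)^{-1/2}|t-\tau|<\delta$, so $x(\tau)$ lies in the \emph{open} set $B(T,\delta)$. Continuity of $x$ together with openness of $B(T,\delta)\subset\inte(K)$ then rules out $\inf A$ lying strictly between $\min(t,t')$ and $t$, forcing $\inf A=\min(t,t')$ and hence $\sigma'\in B(T,\delta)$.

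The refined assertion $(B(S,\delta),T)_t\prec(S,B(T,\delta))_{t'}$ then follows by unpacking the definition of $\prec$ into two containments, each an instance of what has just been established. One is exactly the main claim, and the other, $\{t'\}\times S\subset D_\Mb(\{t\}\times B(S,\delta))$, results from running the same argument with $S$ in place of $T$ and the same $\delta$; the required hypotheses hold because $B(S,\delta)\Subset T$ forces $B(S,\delta)$ to be relatively compact with nonempty exterior in $\Sigma$. Taking the smaller of the two $\epsilon$'s produced in this way yields a single $\epsilon$ that serves both containments, completing the plan.
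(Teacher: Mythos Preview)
Your proof is correct and follows essentially the same approach as the paper's: parameterize inextendible causal curves by the global time coordinate so that causality gives a unit $k_\tau$-speed bound, compare $k_\tau$ with the fixed metric $k_{t_*}$ on a compact set containing $\overline{B(T,\delta)}$, and integrate to bound the $k_{t_*}$-distance travelled by a multiple of $|t-t'|<2\epsilon$. Your bootstrap argument, ensuring the spatial projection does not leave the compact set at an intermediate time, makes explicit what the paper dispatches in a single parenthetical remark (``We eliminate the possibility that $\sigma(\tau)$ leaves $B(T,\delta)$ at some intermediate time by similar reasoning''); the only other minor difference is that the paper observes the same $\epsilon$ works for $S$ since $\overline{B(S,\delta)}\subset T\subset\overline{B(T,\delta)}$, rather than rerunning the argument and taking a minimum.
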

\begin{proof}
Without loss of generality take $t_*=0$ and denote the instantaneous optical metric induced on 
$\Sigma$ via the slice $\{\tau\}\times\Sigma$ of $\Mb$ by $k_{\tau}$. As
$B(T,\delta)$ is relatively compact, there is a constant $K\ge 1$ such that
$k_{0,\sigma}(u,u) \le K k_{{\tau,\sigma}}(u,u)$ for all $u\in T_{\sigma}\Sigma$, 
 $(\tau,\sigma)\in [0,\delta]\times \overline{B(T,\delta)}$.  We set 
 $\epsilon= \delta/(2\sqrt{K})$ and choose any $t,t'\in(-\epsilon,\epsilon)$. 
 
Any smooth inextendible $\Mb$-causal curve $\gamma$ may be parameterized so that $\gamma(\tau) = (\tau,\sigma(\tau))$  ($\tau\in\RR$), where
$\sigma$ is smooth and necessarily obeys $k_{\tau,\sigma(\tau)}(\dot{\sigma}(\tau),\dot{\sigma}(\tau))\le 1$ for all $\tau\in\RR$. Thus if $\gamma(t)\in T$, and $|t|,|t'|<\epsilon$, we may estimate (using
the $k_0$ metric)
\begin{equation}\label{eq:delta_est}
{\text{dist}}(\sigma(t),\sigma(t')) \le \sqrt{K}|t-t'|  < 2\epsilon \sqrt{K}=\delta 
\end{equation}
and hence $\gamma(t')\in B(T,\delta)$ as required. (We eliminate the possibility
that $\sigma(\tau)$ leaves $B(T,\delta)$ at some intermediate time by similar reasoning.)
Thus  $\{t\}\times T \subset D_\Mb(\{t'\}\times B(T,\delta))$ as required. 
Under the additional assumptions concerning $S$, we may
apply the same estimates, and reverse the roles of $t$ and $t'$ to find
$\{t'\}\times S \subset D_\Mb(\{t\}\times B(S,\delta))$ 
in addition to the previous statement concerning $T$,
whereupon  $(B(S,\delta),T)_t\prec (S,B(T,\delta))_{t'}$.
\end{proof} 

\begin{lemma} \label{lem:step}
Suppose that $\Mb\in\Loc$ takes standard form with underlying manifold $\RR\times\Sigma$. 

(a) Let $S_1,S_2,T_1,T_2$ be open relatively compact subsets of $\Sigma$ with $S_2\Subset   S_1 \Subset  T_1 \Subset T_2$  
and so that $T_2$ has nonempty exterior. 
Then, for any $t_*\in\RR$, there exists $\epsilon>0$ such that
\begin{equation}
(S_{1}, T_{1})_{t_1} \prec (S_{2}, T_{2})_{t_2} 
\end{equation}
for all $t_1,t_2\in (t_*-\epsilon,t_*+\epsilon)$.  

(b) Let 
$(S,T)_t$ be a regular Cauchy pair in $\Mb$
for some $t\in\RR$. 
Let $S_{\text{inner}}$, $S_{\text{outer}}$, $T_{\text{inner}}$ and
$T_{\text{outer}}$ be any open relatively compact subsets of $\Sigma$
such that\footnote{The existence of such sets follows from the
assumptions on $S$ and $T$.}
\begin{equation}\label{eq:STchain}
S_{\text{inner}}\Subset   S \Subset  S_{\text{outer}}
\Subset
T_{\text{inner}}\Subset T \Subset T_{\text{outer}}
\end{equation}
and so that $T_{\text{outer}}$ has nonempty exterior. 
Then there exists an $\epsilon>0$ such that 
\begin{equation}
(S_{\text{outer}}, T_{\text{inner}})_{t'} \prec 
(S,T)_t \prec (S_{\text{inner}}, T_{\text{outer}})_{t'} 
\end{equation}
for all $t'\in (t-\epsilon,t+\epsilon)$. In particular, 
every Cauchy surface $\{t'\}\times\Sigma$ with $|t'-t|<\epsilon$ contains a regular Cauchy pair that precedes $(S,T)_t$ and one that is preceded by it.
\end{lemma}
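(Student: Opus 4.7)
The plan is to reduce both parts to Lemma~\ref{lem:lightspeed}, using the instantaneous optical metric on a suitable slice to supply the required safety margin.

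For part (a), I first select a single $\delta>0$ small enough that the open $\delta$-balls in the instantaneous optical metric on $\{t_*\}\times\Sigma$ satisfy both $B(S_{2},\delta)\Subset S_{1}$ and $B(T_{1},\delta)\Subset T_{2}$. Such a $\delta$ exists by compactness of $\overline{S_{2}}$ and $\overline{T_{1}}$ together with the openness of $S_{1}$ and $T_{2}$. Both balls are then relatively compact (being contained in the relatively compact sets $S_1$, $T_2$) and have nonempty exterior in $\Sigma$, inherited from that of $T_{2}$. Applying Lemma~\ref{lem:lightspeed} with $S_{2}$ (respectively $T_{1}$) playing the role of the set $T$ in that lemma yields an $\epsilon>0$ (the minimum of the two values produced) such that for all $t_1,t_2\in(t_*-\epsilon,t_*+\epsilon)$,
\[
\{t_2\}\times S_2 \subset D_\Mb(\{t_1\}\times B(S_2,\delta)) \subset D_\Mb(\{t_1\}\times S_1),
\]
and analogously $\{t_1\}\times T_1\subset D_\Mb(\{t_2\}\times T_2)$, where the outer inclusion in each case uses monotonicity of the Cauchy development. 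Together these two inclusions are exactly the statement $(S_{1},T_{1})_{t_1}\prec(S_{2},T_{2})_{t_2}$.

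For part (b), I apply part (a) twice, with basepoint $t_*=t$. The chain \eqref{eq:STchain} supplies $S\Subset S_{\text{outer}} \Subset T_{\text{inner}} \Subset T$, with $T$ having nonempty exterior in $\Sigma$ by the definition of regular Cauchy pair. Part (a) with $S_2=S$, $S_1=S_{\text{outer}}$, $T_1=T_{\text{inner}}$, $T_2=T$ yields $\epsilon_1>0$ such that $(S_{\text{outer}},T_{\text{inner}})_{t'}\prec(S,T)_t$ for $t'\in(t-\epsilon_1,t+\epsilon_1)$. Dually, \eqref{eq:STchain} gives $S_{\text{inner}}\Subset S\Subset T\Subset T_{\text{outer}}$ with $T_{\text{outer}}$ having nonempty exterior by assumption, so part (a) produces $\epsilon_2>0$ such that $(S,T)_t\prec(S_{\text{inner}},T_{\text{outer}})_{t'}$ for $t'\in(t-\epsilon_2,t+\epsilon_2)$. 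Setting $\epsilon=\min(\epsilon_1,\epsilon_2)$ secures both preorder relations simultaneously, and the pairs $(S_{\text{outer}},T_{\text{inner}})_{t'}$, $(S_{\text{inner}},T_{\text{outer}})_{t'}$ are the required witnesses lying in the Cauchy surface $\{t'\}\times\Sigma$.

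The substantive input here is Lemma~\ref{lem:lightspeed}, which packages the light-speed estimate \eqref{eq:delta_est}; once that is in hand, the present lemma is largely bookkeeping. The only steps needing care are the verification that $B(S_2,\delta)$ and $B(T_1,\delta)$ meet the hypotheses of Lemma~\ref{lem:lightspeed} — especially the nonempty-exterior condition — and keeping straight which time label sits on which side of the preorder when applying part (a) in the two orderings required by part (b).
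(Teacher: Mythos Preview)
Your proof is correct and follows essentially the same approach as the paper: choose $\delta>0$ so that $B(S_2,\delta)\subset S_1$ and $B(T_1,\delta)\subset T_2$, invoke Lemma~\ref{lem:lightspeed} to obtain the required $\epsilon$, and for part~(b) apply part~(a) twice with $t_*=t$. The only cosmetic difference is that the paper uses the second (``Further'') clause of Lemma~\ref{lem:lightspeed} once to obtain the chain $(S_1,T_1)_{t_1}\prec(B(S_2,\delta),T_1)_{t_1}\prec(S_2,B(T_1,\delta))_{t_2}\prec(S_2,T_2)_{t_2}$, whereas you invoke the first clause twice; the content is identical.
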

\begin{proof}
(a) Let $B(U,\delta)$ denote the open $\delta$-ball about $U\subset\Sigma$ in the instantaneous optical metric on $\{t_*\}\times\Sigma$. By assumption on the various sets in the
hypotheses, we may choose $\delta>0$ such 
that $B(S_{2},\delta)\subset S_1$ and $B(T_1,\delta)\subset T_{2}$.  Using these inclusions together with  Lemma~\ref{lem:lightspeed}, there exists
$\epsilon>0$ such that 
\begin{equation}
(S_{1}, T_{1})_{t_1} \prec
(B(S_2,\delta), T_{1})_{t_1} \prec
(S_2, B(T_{1},\delta))_{t_2} \prec (S_2, T_2)_{t_2} 
\end{equation}
holds for all  $t_1,t_2\in (t_*-\epsilon,t_*+\epsilon)$.  
(b) Apply (a) twice, taking $t_*=t$.
\end{proof}

\begin{remark}\label{rem:multiCauchypair}\emph{
It follows immediately that, if finitely many regular Cauchy pairs $(S_j,T_j)$ ($1\le j\le N$) are specified in the Cauchy surface $\{t\}\times\Sigma$, then every Cauchy surface $\{t'\}\times\Sigma$ with 
$t'$ sufficiently close to $t$ contains, for each $j$, a regular Cauchy pair preceding 
$(S_j,T_j)$ and one that is preceded by it.}
\end{remark}

\subsection{Locally covariant quantum field theory}

The basic premise of locally covariant QFT~\cite{BrFrVe03} is that a theory is given by a functor $\Af:\Loc\to\CAlg$, where $\CAlg$ is the category of unital $C^*$-algebras and injective unit-preserving
$*$-homomorphisms.\footnote{Other target categories are possible and frequently employed, for example the category $\Alg$ of unital $*$-algebras with injective unit-preserving $*$-homomorphisms.}
This means that each spacetime $\Mb$ corresponds to a $C^*$-algebra $\Af(\Mb)$, 
and that every morphism $\psi:\Mb\to\Nb$ between spacetimes has a corresponding $\CAlg$-morphism
$\Af(\psi):\Af(\Mb)\to\Af(\Nb)$, subject to the requirement that $\Af(\id_\Mb)=\id_{\Af(\Mb)}$
and $\Af(\psi\circ\varphi)=\Af(\psi)\circ\Af(\varphi)$. 

Given such a functor, a net of local algebras may be defined in each spacetime $\Mb\in\Loc$ 
by setting $\Af^\kin(\Mb;O)$ to be the image of the map $\Af(\iota_{\Mb;O})$ for
each nonempty $O\in\OO(\Mb)$.  As described in~\cite{BrFrVe03}, these local algebras
obey suitable generalizations of the assumptions in the Araki--Haag--Kastler framework~\cite{Haag}. In particular, they are \emph{isotonous}: if $O_1\subset O_2$ then $\Af^\kin(\Mb;O_1)\subset\Af^\kin(\Mb;O_2)$. 

The additional assumptions we will use are that the theory is \emph{Einstein causal}:
if $O_1,O_2\in\OO(\Mb)$ are causally disjoint (in the sense that no
causal curve connects them), then $\Af^\kin(\Mb;O_1)$ and
$\Af^\kin(\Mb;O_2)$ commute, and that the theory has the \emph{timeslice property}:
if $\psi:\Mb\to\Nb$ is Cauchy, then $\Af(\psi)$ is an isomorphism.
\begin{definition} \label{def:lcQFT}
A \emph{locally covariant QFT} is a functor $\Af:\Loc\to\CAlg$
obeying Einstein causality and having the timeslice property. 
\end{definition}

The utility of the deformation result Proposition~\ref{prop:Cauchy_chain}
arises because any chain of Cauchy morphisms such as \eqref{eq:Cauchy_chain} induces, by the timeslice property, an
isomorphism 
\begin{equation}\label{eq:alg_Cauchy_chain}
\Af(\delta)\circ\Af(\gamma)^{-1}\circ\Af(\beta)\circ
\Af(\alpha)^{-1}:\Af(\Mb)\to\Af(\Nb) .
\end{equation}
Although such isomorphisms are not canonical, owing to the many choices used in the construction, they often permit the transfer of properties and
structures between the instantiations of the theory on $\Mb$ and $\Nb$.

The description just given encodes the algebraic aspects of the theory.
To incorporate states as well, we first define a category $\CAlgSts$ as follows.
Objects of $\CAlgSts$ are pairs $(\Ac,\Sc)$, where $\Ac\in\CAlg$ and $\Sc$
is a state space for $\Ac$, i.e., a convex subset of the set of all states on $\Ac$,
that is closed under operations induced by $\Ac$.\footnote{That is, if
$\omega\in\Sc$ and $B\in\Ac$ with $\omega(B^*B)> 0$, then the state $\omega_B(A):=\omega(B^*AB)/\omega(B^*B)$ is also an element of $\Sc$.}
A morphism in $\CAlgSts$ between $(\Ac,\Sc)$ and $(\Bc,\Tc)$ is induced
by any $\CAlg$-morphism $\alpha:\Ac\to\Bc$ such that $\alpha^*\Tc\subset\Sc$;
as a slight abuse of notation we will often denote the $\CAlgSts$-morphism
in the same way as its underlying $\CAlg$ morphism. 

A state space for a locally covariant QFT $\Af:\Loc\to\CAlg$ is
an assignment of state space $\Sf(\Mb)$ to each $\Af(\Mb)$ ($\Mb\in\Loc$)
so that $\Xf(\Mb)=(\Af(\Mb),\Sf(\Mb))$ defines a functor $\Xf:\Loc\to\CAlgSts$
for which each $\Xf(\psi)$ has underlying $\CAlg$-morphism $\Af(\psi)$. 
We say that $\Xf$ obeys the timeslice axiom
if $\Xf(\psi)$ is an isomorphism in $\CAlgSts$ for all Cauchy morphisms $\psi:\Mb\to\Nb$,
which means that $\Af(\psi)^*\Sf(\Nb)=\Sf(\Mb)$
(of course, $\Af(\psi)$ is also an isomorphism because $\Af$ obeys Definition~\ref{def:lcQFT}). In this case $\Xf$ will be described
as a \emph{locally covariant QFT with states}.

\section{Main Results} \label{sect:split}

\subsection{The split property}

The split property is defined as follows.\footnote{This definition directly generalizes
that used in Minkowski space, but differs from the condition studied in~\cite{BrFrImRe:2014}
and discussed briefly at the end of this section.}
\begin{definition}\label{def:split}
Let $\Af:\Loc\to\CAlg$ be a locally covariant QFT and $\Mb\in\Loc$.
A state $\omega$ on $\Af(\Mb)$ is said to have the \emph{split property}
for a regular Cauchy pair $(S,T)$ if, in the GNS representation ($\HH,\pi,\Omega)$ of $\Af(\Mb)$ induced by $\omega$, there is a type $\text{\em I}$ factor $\Ngth$ such that
\begin{equation}\label{eq:split}
\pi(\Af^\kin(\Mb;D_\Mb(S)))''\subset \Ngth \subset  \pi(\Af^\kin(\Mb;D_\Mb(T)))''.
\end{equation} 
(For brevity, we will sometimes say that $\omega$ is split for $(S,T)$.)
\end{definition}
\begin{remark}\label{rem:split}\emph{
If $\omega$ has the split property for $(S,T)$ then 
it does for every $(\tilde{S},\tilde{T})$ with $(S,T)\prec (\tilde{S},\tilde{T})$:
for $\tilde{S}\subset D_\Mb(S)$ implies $D_\Mb(\tilde{S})\subset
D_\Mb(S)$ and hence by isotony
\begin{align}
\pi(\Af^\kin(\Mb;D_\Mb(\tilde{S})))''  &\subset 
\pi(\Af^\kin(\Mb;D_\Mb(S)))'', \nonumber\\
 \pi(\Af^\kin(\Mb;D_\Mb(T)))''  &\subset  \pi(\Af^\kin(\Mb;D_\Mb(\tilde{T})))''.
\end{align}
Moreover, if nonempty $O_i\in\OO(\Mb)$ obey $O_1\subset D_\Mb(S)$, $D_\Mb(T)\subset O_2$, then there is a split inclusion
\begin{equation}\label{eq:splitOi}
\pi(\Af^\kin(\Mb;O_1))''\subset\Ngth \subset \pi(\Af^\kin(\Mb;O_2))''
\end{equation}
by the same argument. }
\end{remark}

\begin{lemma}\label{lem:split}
Suppose $\psi:\Mb\to\Nb$ is a Cauchy morphism and let $\Af$ be a locally covariant QFT. A state $\omega_\Nb$ on $\Af(\Nb)$ has
the split property for a regular Cauchy pair $(\psi(S),\psi(T))$
with $\overline{\psi(T)}\subset \psi(\Mb)$ 
 if and only if $\Af(\psi)^*\omega_\Nb$ has
the split property for $(S,T)$. (As $\Af(\psi)$ is an isomorphism, 
this implies that $\omega_\Mb$ is 
split for $(S,T)$ if and only if $(\Af(\psi)^{-1})^*\omega_\Mb$
is split for $(\psi(S),\psi(T))$.)
\end{lemma}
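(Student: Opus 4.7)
The plan is to leverage that $\Af(\psi)$ is an isomorphism of $C^*$-algebras, by the timeslice axiom applied to the Cauchy morphism $\psi$, and to transport the split inclusion through it. A GNS triple $(\HH,\pi_\Nb,\Omega)$ for $\omega_\Nb$ yields (up to unitary equivalence) a GNS triple $(\HH,\pi_\Nb\circ\Af(\psi),\Omega)$ for $\omega_\Mb:=\Af(\psi)^*\omega_\Nb$, so the lemma will reduce to the algebraic identity
\[
\Af(\psi)\bigl(\Af^\kin(\Mb;D_\Mb(S))\bigr) = \Af^\kin(\Nb;D_\Nb(\psi(S)))
\]
together with its analogue for $T$. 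Given these, a type I factor acting on $\HH$ interpolates on one side of the isomorphism $\Af(\psi)$ if and only if it does on the other, which yields both the main iff and its parenthetical consequence.

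The identity is established in two steps. First, functoriality of $\Af$ gives $\Af(\psi)(\Af^\kin(\Mb;D_\Mb(S)))=\Af^\kin(\Nb;\psi(D_\Mb(S)))$, provided $\psi(D_\Mb(S))\in\OO(\Nb)$; the latter holds because $D_\Mb(S)$ is causally convex in $\Mb$ and $\psi(\Mb)$ is causally convex in $\Nb$. Second, I would show that the inclusion $\psi(D_\Mb(S))\hookrightarrow D_\Nb(\psi(S))$ is itself a Cauchy morphism in $\Loc$, whereupon the timeslice axiom forces the two local algebras $\Af^\kin(\Nb;\psi(D_\Mb(S)))$ and $\Af^\kin(\Nb;D_\Nb(\psi(S)))$ to coincide. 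The Cauchy-surface condition is automatic here, since $\psi(S)\subset\psi(D_\Mb(S))$ and $\psi(S)$ is by construction a Cauchy surface of $\Nb|_{D_\Nb(\psi(S))}$. The substantive point is therefore the set-theoretic containment $\psi(D_\Mb(S))\subset D_\Nb(\psi(S))$.

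This containment is the main obstacle, because an inextendible causal curve $\gamma$ in $\Nb$ through $\psi(p)$ (for $p\in D_\Mb(S)$) is free to leave $\psi(\Mb)$. The strategy is to exploit causal convexity of $\psi(\Mb)$ in $\Nb$: the parameter set $J:=\gamma^{-1}(\psi(\Mb))$ is an open interval $(\alpha,\beta)$ containing the parameter of $\psi(p)$, and $\tilde\gamma:=\psi^{-1}\circ\gamma|_{(\alpha,\beta)}$ is a smooth causal curve in $\Mb$ through $p$. A would-be endpoint $q\in\Mb$ of $\tilde\gamma$ as $t\to\beta^-$ would force $\gamma(t)\to\psi(q)\in\psi(\Mb)$; this contradicts either the inextendibility of $\gamma$ in $\Nb$ (when $\beta=+\infty$) or the maximality of $(\alpha,\beta)$ (when $\beta$ is finite, since then $\gamma(\beta)=\psi(q)\in\psi(\Mb)$ would allow $\beta$ to be enlarged). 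The analogous argument applies at $\alpha$, so $\tilde\gamma$ is inextendible in $\Mb$; by $p\in D_\Mb(S)$ it meets $S$, and therefore $\gamma$ meets $\psi(S)$. The same argument for $T$ completes the identity, and the lemma follows as in the first paragraph.
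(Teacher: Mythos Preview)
Your proof is correct and follows essentially the same approach as the paper's: both use that $\Af(\psi)$ is a $C^*$-isomorphism (by timeslice) to obtain a unitary intertwining the two GNS representations, so that a type~I factor interpolates on one side if and only if it does on the other. The paper's version is terser, recording only the identity $\pi_{\omega_\Nb}(\Af^\kin(\Nb;\psi(O)))'' = U\pi_{\omega_\Mb}(\Af^\kin(\Mb;O))''U^{-1}$ for $O\in\OO(\Mb)$ and declaring that ``the result follows''; you have made explicit the step the paper leaves to the reader, namely that $\Af^\kin(\Nb;\psi(D_\Mb(S)))=\Af^\kin(\Nb;D_\Nb(\psi(S)))$, by exhibiting the inclusion $\psi(D_\Mb(S))\hookrightarrow D_\Nb(\psi(S))$ as a Cauchy morphism and invoking timeslice again. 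Your inextendibility argument for the pulled-back curve $\tilde\gamma$ is the standard one and is sound.
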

\begin{proof}
Let $\omega_\Mb=\Af(\psi)^*\omega_\Nb$ 
and write  
$(\HH_{\omega_{\star}},\pi_{\omega_{\star}},\Omega_{\omega_{\star}})$,
where $\star=\Mb$ or $\Nb$, for the corresponding GNS representations. 
As $\Af(\psi)$ is an isomorphism there is a unitary $U:\HH_{\omega_\Mb}\to
\HH_{\omega_\Nb}$ so that $U\Omega_{\omega_\Mb} = \Omega_{\omega_\Nb}$ and
\begin{equation}\label{eq:intertwine}
U \pi_{\omega_\Mb}(A)= \pi_{\omega_\Nb}(\Af(\psi) A)U,\qquad (A\in\Af(\Mb)).
\end{equation} 
Consequently,  $\pi_{\omega_\Nb}(\Af^\kin(\Nb;\psi(O)))'' =
U\pi_{\omega_\Mb}(\Af^\kin(\Mb;O))''U^{-1}$ for any nonempty $O\in\OO(\Mb)$, and 
as $U\Ngth U^{-1}$ is a type $\text{I}$ factor if and only if $\Ngth$ is,
the  result follows.
\end{proof}

We now present our first deformation result on the split property.
\begin{theorem}\label{thm:split}
Suppose $\Af$ is a locally covariant QFT. Let $\Mb,\Nb\in\Loc$ have oriented-diffeomorphic Cauchy surfaces and suppose $\omega_\Nb$ is a state on $\Af(\Nb)$ 
that has the split property for all regular Cauchy pairs in $\Nb$.  
Given any regular Cauchy pair $(S_\Mb,T_\Mb)$ in $\Mb$, there is a chain of Cauchy morphisms
between $\Mb$ and $\Nb$ inducing an isomorphism $\nu:\Af(\Mb)\to\Af(\Nb)$ such
that $\nu^*\omega_\Nb$ has the split property for $(S_\Mb,T_\Mb)$. Consequently (by Remark~\ref{rem:split})
if nonempty $O_i\in\OO(\Mb)$ are such that $O_1\subset D_\Mb(S_\Mb)$, $D_\Mb(T_\Mb)\subset O_2$, then there is a split inclusion of the form
\eqref{eq:splitOi} in the GNS representation of $\nu^*\omega_\Nb$.  
\end{theorem}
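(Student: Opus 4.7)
The strategy is to reduce both spacetimes to standard form on a common manifold, build an interpolating spacetime and Cauchy chain via Proposition~\ref{prop:Cauchy_chain}, and transport the split property of $\omega_\Nb$ back along the chain.

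First, by Proposition~\ref{prop:BS} I may put $\Mb$ in standard form on $\RR\times\Sigma$ so that $(S_\Mb,T_\Mb)=(S,T)_0$ on the slice $\{0\}\times\Sigma$, and identify $\Nb$ with a standard form on the same manifold via the oriented-diffeomorphism hypothesis. Since $\overline{D_\Mb(T)}$ is compact, I fix $0<t_1<t_1'<t_2'<t_2$ with $t_1$ large enough that $\overline{D_\Mb(T)}\subset P:=(-\infty,t_1)\times\Sigma$, and follow the construction in the proof of Proposition~\ref{prop:Cauchy_chain} to produce an interpolating spacetime $\Ib$ with $g_\Ib=g_\Mb$ on $P$ and $g_\Ib=g_\Nb$ on $F:=(t_2,\infty)\times\Sigma$, together with the Cauchy morphisms $\alpha:\Pb\to\Mb$, $\beta:\Pb\to\Ib$, $\gamma:\Fb\to\Ib$, $\delta:\Fb\to\Nb$, where $\Pb=\Mb|_P$ and $\Fb=\Nb|_F$. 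Let $\nu$ denote the induced isomorphism $\Af(\Mb)\to\Af(\Nb)$ and $\omega_\Ib:=(\Af(\delta)\circ\Af(\gamma)^{-1})^*\omega_\Nb$ the intermediate state on $\Af(\Ib)$.

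Applying Lemma~\ref{lem:split} to $\gamma$ and $\delta$, and using that $\omega_\Nb$ is split for every regular Cauchy pair in $\Nb$, the state $\omega_\Ib$ is split for every regular Cauchy pair in $\Ib$ whose closure lies in $F$; in particular, for some $(\hat S,\hat T)_{t_3}$ with $t_3>t_2$. I would then build a finite chain of regular Cauchy pairs in $\Ib$, linked by the preorder $\prec$, running from $(\hat S,\hat T)_{t_3}$ back to $(S,T)_0$, by iterating Lemma~\ref{lem:step} (and using the uniform version in Remark~\ref{rem:multiCauchypair}); Remark~\ref{rem:split} then propagates the split property upward along this chain, so that $\omega_\Ib$ is split for $(S,T)_0$. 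A final application of Lemma~\ref{lem:split} to $\alpha$ and $\beta$ transfers this to split of $\nu^*\omega_\Nb$ at $(S_\Mb,T_\Mb)$ in $\Mb$; the claim about arbitrary $O_1,O_2\in\OO(\Mb)$ with $O_1\subset D_\Mb(S_\Mb)$ and $D_\Mb(T_\Mb)\subset O_2$ then follows from Remark~\ref{rem:split}.

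The main difficulty is assembling the $\prec$-chain in $\Ib$ across the deformation region $(t_1,t_2)\times\Sigma$: the constraint that successive outer sets must lie in the Cauchy development of their predecessors means a naive iteration is limited by the future extent of $D_\Mb(T)$, which sits in $P$ below $t_1$, too small to reach $t_3$. The point is instead to exploit the freedom in the interpolating metric afforded by Proposition~\ref{prop:Cauchy_chain}: $g_\Ib$ may be chosen with considerably narrower light cones than $g_\Mb$ or $g_\Nb$ throughout $(t_1,t_2)\times\Sigma$, so that the Cauchy development $D_\Ib(T)$ extends forward in time well beyond $t_3$. A compactness argument on $[0,t_3]$, combined with Remark~\ref{rem:multiCauchypair}, then suffices to build the required finite $\prec$-chain.
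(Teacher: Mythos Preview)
Your overall deformation strategy is sound, but the execution contains a genuine gap. Having fixed $t_1$ large enough that $\overline{D_\Mb(T)}\subset P=(-\infty,t_1)\times\Sigma$, you cannot recover by narrowing the light cones of $g_\Ib$ on $(t_1,t_2)\times\Sigma$. Since $g_\Ib=g_\Mb$ on $P$, any past-inextendible $g_\Ib$-causal curve from a point $p$ with $0<t(p)<t_1$ is already a past-inextendible $g_\Mb$-causal curve, so the future part of $D_\Ib(\{0\}\times T)$ within $P$ coincides with that of $D_\Mb(\{0\}\times T)$ and therefore closes up strictly below $t_1$. Thus $D_\Ib(T)$ never reaches the deformation region, let alone $t_3$, no matter how narrow the cones are made above $t_1$. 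Worse, each successive outer set in your $\prec$-chain must satisfy $T_i\subset D_\Ib(T_{i-1})\subset D_\Ib(T)$, so the entire chain is trapped inside $D_\Mb(T)$ and can never reach $F$. The large choice of $t_1$ is the culprit; it is also unnecessary, since Lemma~\ref{lem:split} for $\alpha$ only needs $\overline{T}\subset P$, not $\overline{D_\Mb(T)}\subset P$.

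The paper avoids this by reversing the order of choices. One first applies Lemma~\ref{lem:step}(b) in $\Mb$ to find $t_*>0$ and a pair $(S_*,T_*)_{t_*}$ with $(S_*,T_*)_{t_*}\prec_\Mb(S_\Mb,T_\Mb)$; then, applying the same lemma in $\Nb$, one finds $t_\Nb>t_*$ and $(S_\Nb,T_\Nb)_{t_\Nb}\prec_\Nb(S_*,T_*)_{t_*}$. Only afterwards are the interpolation parameters fixed, with $0<t_1<t_1'<t_*<t_2'<t_2<t_\Nb$. The light-cone containment built into Proposition~\ref{prop:Cauchy_chain} (every $g$-timelike vector is $g_\Mb$-timelike on $(-\infty,t_2')\times\Sigma$ and $g_\Nb$-timelike on $(t_1',\infty)\times\Sigma$) then transfers both the $\prec_\Mb$ and the $\prec_\Nb$ orderings directly to $\prec_\Ib$, yielding $(S_\Nb,T_\Nb)_{t_\Nb}\prec_\Ib(S_\Mb,T_\Mb)$ in a single step. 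No iterated chain, no compactness argument, and no additional narrowing of cones is required.
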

\begin{proof}
Assume without loss of generality (by Proposition~\ref{prop:BS} and Lemma~\ref{lem:Cauchypairs}) that $\Mb$ is in standard form
$\Mb=(\RR\times\Sigma,g_\Mb,\ogth,\tgth_\Mb)$ and that 
$S_\Mb$ and $T_\Mb$ lie in the Cauchy surface $\{t_\Mb\}\times\Sigma$ for some $t_\Mb\in\RR$.  
By Lemma~\ref{lem:step} there exist $t_*>t_\Mb$ and a regular Cauchy pair $(S_*,T_*)$ in $\{t_*\}\times\Sigma$
such that $(S_*,T_*)\prec_\Mb (S_\Mb,T_\Mb)$, where $\prec_\Mb$
indicates the preorder with respect to the causal structure of $\Mb$. 

Now we may also assume without loss of generality that $\Nb$ is also in standard form $\Nb=(\RR\times\Sigma,g_\Nb,\ogth,\tgth_\Nb)$.
As $(S_*,T_*)$ is also a regular Cauchy pair for $\Nb$,
there exist $t_\Nb>t_*$ and a regular Cauchy pair $(S_\Nb,T_\Nb)$ in $\{t_\Nb\}\times\Sigma$ such that $(S_\Nb,T_\Nb)\prec_\Nb (S_*,T_*)$.

We now construct a metric $g$ using Prop.~\ref{prop:Cauchy_chain},
choosing the values $t_1,t_1',t_2',t_2$ so that $t_\Mb<t_1<t_1'<t_*<t_2'<t_2<t_\Nb$, 
and thus creating an interpolating
globally hyperbolic spacetime $\Ib$ and a chain of Cauchy morphisms \eqref{eq:Cauchy_chain}.
The key point is that  $(S_\Nb,T_\Nb) \prec_\Ib(S_*,T_*) $ and
$(S_*,T_*)\prec_\Ib (S_\Mb,T_\Mb)$  and hence
$(S_\Nb,T_\Nb)\prec_\Ib(S_\Mb,T_\Mb)$. 
To see this, consider any inextendible
$g$-timelike curve $\gamma$ through $S_\Mb$. In the region $t\le t_*$ this is
also a $g_\Mb$-timelike curve and intersects $S_*$, because
$S_\Mb\subset D_\Mb(S_*)$. Thus $S_\Mb\subset D_\Ib(S_*)$. Similarly,
if $\gamma$ is an inextendible $g$-timelike curve through $T_*$, it is 
$g_\Mb$-timelike in $\Mb$ and intersects $T_\Mb$, so $S_*\subset D_\Ib(S_\Mb)$. This shows that
$(S_*,T_*)\prec_\Ib (S_\Mb,T_\Mb)$; one proves $(S_\Nb,T_\Nb)\prec_\Ib (S_*,T_*)$  
in the same way. 

As $\omega_\Nb$ has the split property for $(S_\Nb,T_\Nb)$ in $\Nb$,
it follows (applying Lemma~\ref{lem:split} twice) that $(\Af(\delta)\circ\Af(\gamma)^{-1})^*\omega_\Nb$ has the
split property for $(S_\Nb,T_\Nb)$, as a regular Cauchy pair in $\Ib$,
and hence for $(S_\Mb,T_\Mb)$, again as a regular Cauchy pair in $\Ib$,
because $(S_\Nb,T_\Nb)\prec_\Ib (S_\Mb,T_\Mb)$.
Two further applications of Lemma~\ref{lem:split} show that
$(\Af(\beta)\circ\Af(\alpha)^{-1})^*
(\Af(\delta)\circ\Af(\gamma)^{-1})^*\omega_\Nb = \nu^*\omega_\Nb$
has the split property for $(S_\Mb,T_\Mb)$ in $\Mb$. 
\end{proof}
\begin{remark}\label{rem:multisplit}\emph{
The result may be extended as follows. Suppose finitely many regular
Cauchy pairs $(S_\Mb^{(j)},T_\Mb^{(j)})$ ($1\le j\le N$), lying in a common Cauchy surface of $\Mb$ are given. Owing to Remark~\ref{rem:multiCauchypair}, the values $t_*$ and $t_\Nb$ 
in the proof above may be chosen so that there are Cauchy pairs 
$(S_*^{(j)},T_*^{(j)})$ and $(S_\Nb^{(j)},T_\Nb^{(j)})$ ($1\le j\le N$) lying
in the hypersurfaces $\{t_*\}\times\Sigma$ and $\{t_\Nb\}\times\Sigma$ respectively
so that  $(S_\Nb^{(j)},T_\Nb^{(j)})\prec_\Nb (S_*^{(j)},T_*^{(j)}) \prec_\Mb (S_\Mb^{(j)},T_\Mb^{(j)})$ for each $1\le j\le N$ and hence 
 $(S_\Nb^{(j)},T_\Nb^{(j)})\prec_\Ib (S_\Mb^{(j)},T_\Mb^{(j)})$ for a common interpolating
metric. Then the state $\nu^*\omega_\Nb$ has the split property 
for each of the pairs $(S_\Mb^{(j)},T_\Mb^{(j)})$ ($1\le j\le N$).}
\end{remark}

For theories with states $\Xf=(\Af,\Sf):\Loc\to\CAlgSts$, we may say a little more.
First, if the state $\omega_\Nb$ in the hypotheses of Theorem~\ref{thm:split}
belongs to the state space $\Sf(\Nb)$, then the induced state obeys $\nu^*\omega_\Nb\in\Sf(\Mb)$,
as a result of the timeslice property for $\Xf$ and the fact that $\nu$ arises from 
a chain of Cauchy morphisms. Much more follows if each $\Sf(\Mb)$ consists
of mutually \emph{locally quasi-equivalent} states on $\Af(\Mb)$,
in which case we describe $\Xf$ as obeying local quasi-equivalence.
This condition requires that  
for every spacetime $\Mb$, relatively compact nonempty $O\in\OO(\Mb)$ and states $\omega_i\in\Sf(\Mb)$ ($i=1,2$), the
GNS representations $(\HH_{\omega_i},\pi_{\omega_i},\Omega_i)$ restrict to quasi-equivalent 
representations of $\Af^\kin(\Mb;O)$, i.e.,
there is an isomorphism of von Neumann algebras
$\beta:\pi_{\omega_1}(\Af^\kin(\Mb;O))'' \to\pi_{\omega_2}(\Af^\kin(\Mb;O))''$
such that $\beta\circ\pi_{\omega_1}(A)=\pi_{\omega_2}(A)$ for all $A\in 
\Af^\kin(\Mb;O)$.\footnote{An equivalent definition of quasi-equivalence  is that the sets of states on $\Af^\kin(\Mb;O)$ induced by
density matrices on $\HH_1$ and $\HH_2$ coincide \cite[Thm 2.4.26]{BratRob}.}
An example of a locally quasi-equivalent state space ~\cite[Thm 3.4]{BrFrVe03} is provided by 
taking, in each spacetime $\Mb$, all states on the Weyl algebra of the Klein--Gordon field that are locally quasi-equivalent to any quasi-free
Hadamard state (the latter being mutually locally quasi-equivalent~\cite{Verch:1994}).  We have:
\begin{lemma} \label{lem:quasi}
If state $\omega_1$ has the split property for regular Cauchy pair $(S,T)$
in $\Mb$ and $\omega_2$ is locally quasi-equivalent to $\omega_1$, then $\omega_2$
also has the split property for $(S,T)$. 
\end{lemma}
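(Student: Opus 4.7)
The plan is to transfer the witnessing type I factor from the GNS representation of $\omega_1$ to that of $\omega_2$ along the von Neumann algebra isomorphism provided by local quasi-equivalence. First I would observe that $D_\Mb(T)$ is a \emph{relatively compact} element of $\OO(\Mb)$: it is already noted in the preceding text to lie in $\OO(\Mb)$, while $\overline{D_\Mb(T)}\subset D_\Mb(\overline{T})$ is compact because $\overline{T}$ is a compact subset of a smooth spacelike Cauchy surface of the globally hyperbolic spacetime $\Mb$, and Cauchy developments of compact subsets of Cauchy surfaces are compact in globally hyperbolic spacetimes. Local quasi-equivalence therefore applies at $D_\Mb(T)$ and yields a $*$-isomorphism
\[
\beta\colon \pi_{\omega_1}\bigl(\Af^\kin(\Mb;D_\Mb(T))\bigr)''\longrightarrow \pi_{\omega_2}\bigl(\Af^\kin(\Mb;D_\Mb(T))\bigr)''
\]
satisfying $\beta\circ\pi_{\omega_1}(A)=\pi_{\omega_2}(A)$ for all $A\in\Af^\kin(\Mb;D_\Mb(T))$.

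Given a type I factor $\Ngth$ witnessing the split property for $\omega_1$, I would set $\Ngth':=\beta(\Ngth)$; since $\beta$ is a von Neumann algebra isomorphism, $\Ngth'$ is again a type I factor, and it is manifestly contained in $\pi_{\omega_2}(\Af^\kin(\Mb;D_\Mb(T)))''$. For the inner inclusion, isotony gives $\Af^\kin(\Mb;D_\Mb(S))\subset\Af^\kin(\Mb;D_\Mb(T))$, so the intertwining property of $\beta$ ensures that it maps the generating set $\pi_{\omega_1}(\Af^\kin(\Mb;D_\Mb(S)))$ onto $\pi_{\omega_2}(\Af^\kin(\Mb;D_\Mb(S)))$; normality of $\beta$, automatic for $*$-isomorphisms between von Neumann algebras, then extends this to the corresponding double commutants, giving
\[
\beta\bigl(\pi_{\omega_1}(\Af^\kin(\Mb;D_\Mb(S)))''\bigr)=\pi_{\omega_2}(\Af^\kin(\Mb;D_\Mb(S)))''.
\]
Applying $\beta$ to the split inclusion for $\omega_1$ therefore delivers
\[
\pi_{\omega_2}(\Af^\kin(\Mb;D_\Mb(S)))''\subset\Ngth'\subset\pi_{\omega_2}(\Af^\kin(\Mb;D_\Mb(T)))'',
\]
as required.

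There is no substantial obstacle: the only non-routine ingredient is the relative compactness of $D_\Mb(T)$, a standard feature of Cauchy developments in globally hyperbolic spacetimes, together with the familiar fact that $*$-isomorphisms of von Neumann algebras are normal. In essence the lemma records that the split property, being formulated purely in terms of local von Neumann algebras and an intermediate type I factor, is transported intact by any local intertwiner between the two representations.
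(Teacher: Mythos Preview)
Your proof is correct and follows essentially the same route as the paper: transport the type~I factor $\Ngth$ along the von Neumann isomorphism $\beta$ provided by local quasi-equivalence on $D_\Mb(T)$, and observe that $\beta$ restricts to an isomorphism on the smaller algebra for $D_\Mb(S)$. You supply more detail than the paper does---in particular the verification that $D_\Mb(T)$ is relatively compact (needed for local quasi-equivalence to apply) and the appeal to normality of $\beta$---but the argument is the same; one cosmetic suggestion is to avoid the notation $\Ngth'$ for $\beta(\Ngth)$, since the prime already denotes commutant throughout the paper.
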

\begin{proof}
Let $\Ngth$ be the type $\text{I}$ factor 
obeying \eqref{eq:split} and  let $\beta:\pi_{\omega_1}(\Af^\kin(\Mb;D_\Mb(T)))'' \to\pi_{\omega_2}(\Af^\kin(\Mb;D_\Mb(T)))''$
be the isomorphism induced by local quasi-equivalence, obeying 
$\beta\circ\pi_{\omega_1}=\pi_{\omega_2}$ on $\Af^\kin(\Mb;D_\Mb(T))$.
In particular, $\beta$ restricts to an isomorphism of 
$\pi_{\omega_1}(\Af^\kin(\Mb;D_\Mb(S)))'' \to\pi_{\omega_2}(\Af^\kin(\Mb;D_\Mb(S)))''$.
Then $\beta(\Ngth)$ is a type $\text{I}$ factor, and clearly obeys
$\pi_{\omega_2}(\Af^\kin(\Mb;D_\Mb(S)))'' \subset\beta(\Ngth)\subset \pi_{\omega_2}(\Af^\kin(\Mb;D_\Mb(T)))''$.
\end{proof}
As an immediate consequence (just as was argued for the Klein--Gordon theory in~\cite{Verch_nucspldua:1993}):
\begin{theorem}
Suppose $\Xf=(\Af,\Sf):\Loc\to\CAlgSts$ is a locally covariant QFT with states obeying local quasi-equivalence. Let $\Mb,\Nb\in\Loc$ have oriented-diffeomorphic Cauchy surfaces and suppose $\omega_\Nb\in\Sf(\Nb)$  has the split property for all regular Cauchy pairs in $\Nb$.  
Then every state $\omega_\Mb\in\Sf(\Mb)$ obeys the split property for all regular
Cauchy pairs in $\Mb$.  Consequently, 
if $O_i\in\OO(\Mb)$ are such that $O_1\subset D_\Mb(S)$, $D_\Mb(T)\subset O_2$,
for a regular Cauchy pair $(S,T)$ in $\Mb$, then there is a split inclusion of the form
\eqref{eq:splitOi} in the GNS representation induced by any state of $\Sf(\Mb)$.   
\end{theorem}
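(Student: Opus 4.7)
The plan is to combine the deformation statement of Theorem~\ref{thm:split} with the local quasi-equivalence hypothesis through Lemma~\ref{lem:quasi}. Fix an arbitrary state $\omega_\Mb\in\Sf(\Mb)$ and an arbitrary regular Cauchy pair $(S_\Mb,T_\Mb)$ in $\Mb$; the goal is to show that $\omega_\Mb$ is split for $(S_\Mb,T_\Mb)$, since once this is proved for arbitrary pairs the full split property follows.

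First I would invoke Theorem~\ref{thm:split} applied to $(S_\Mb,T_\Mb)$ to produce a chain of Cauchy morphisms between $\Mb$ and $\Nb$ together with an induced isomorphism $\nu:\Af(\Mb)\to\Af(\Nb)$ such that $\nu^*\omega_\Nb$ has the split property for $(S_\Mb,T_\Mb)$. Because $\Xf$ is a locally covariant QFT with states obeying the timeslice axiom, each individual $\CAlgSts$-morphism in this chain is an isomorphism in $\CAlgSts$, and consequently $\nu$ itself pulls $\Sf(\Nb)$ back to $\Sf(\Mb)$; hence $\nu^*\omega_\Nb\in\Sf(\Mb)$. Next, local quasi-equivalence of the state space $\Sf(\Mb)$ guarantees that the two elements $\omega_\Mb$ and $\nu^*\omega_\Nb$ of $\Sf(\Mb)$ are locally quasi-equivalent on $\Af^\kin(\Mb;O)$ for every relatively compact nonempty $O\in\OO(\Mb)$, in particular on $\Af^\kin(\Mb;D_\Mb(T_\Mb))$. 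An application of Lemma~\ref{lem:quasi} then transfers the split property for $(S_\Mb,T_\Mb)$ from $\nu^*\omega_\Nb$ to $\omega_\Mb$. Since $\omega_\Mb$ and $(S_\Mb,T_\Mb)$ were arbitrary, this yields the first claim.

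The consequence concerning split inclusions for $O_1\subset D_\Mb(S)$ and $D_\Mb(T)\subset O_2$ is then an immediate application of Remark~\ref{rem:split}: given the type $\text{I}$ factor $\Ngth$ furnished by the split property of $\omega_\Mb$ for $(S,T)$, isotony of the kinematic net yields inclusions $\pi(\Af^\kin(\Mb;O_1))''\subset\pi(\Af^\kin(\Mb;D_\Mb(S)))''$ and $\pi(\Af^\kin(\Mb;D_\Mb(T)))''\subset\pi(\Af^\kin(\Mb;O_2))''$, giving the desired inclusion of the form \eqref{eq:splitOi}. There is essentially no hard step here: the analytic content is absorbed in Theorem~\ref{thm:split} and the algebraic transfer step is carried out by Lemma~\ref{lem:quasi}. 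The only point that requires verification is that $\nu^*\omega_\Nb$ remains in $\Sf(\Mb)$, which is why the hypothesis that the timeslice axiom holds at the level of $\Xf$ (not merely at the level of $\Af$) is essential.
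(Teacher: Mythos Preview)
Your proposal is correct and follows essentially the same approach as the paper: apply Theorem~\ref{thm:split} to obtain a state $\nu^*\omega_\Nb\in\Sf(\Mb)$ that is split for the given pair, then use Lemma~\ref{lem:quasi} together with local quasi-equivalence to transfer this to any $\omega_\Mb\in\Sf(\Mb)$. The paper's proof is terser (the fact that $\nu^*\omega_\Nb\in\Sf(\Mb)$ via the timeslice property of $\Xf$ is stated in the paragraph preceding the theorem rather than inside the proof), but the logical structure is identical.
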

\begin{proof}
For each regular Cauchy pair $(S_\Mb,T_\Mb)$ of $\Mb$,  
Theorem~\ref{thm:split} shows the existence of some state in $\Sf(\Mb)$ having
the split property for $(S_\Mb,T_\Mb)$, and hence by Lemma~\ref{lem:quasi}
and local quasi-equivalence of $\Xf$, the same is true for all states of $\Sf(\Mb)$.
\end{proof}

\subsection{Partial Reeh--Schlieder results}

As already mentioned, our result on the split property
was inspired by Sanders' partial analogue of the Reeh--Schlieder theorem~\cite{Sanders_ReehSchlieder}. The original
Reeh--Schlieder theorem~\cite{ReehSchlieder:1961} establishes that the Minkowski vacuum vector is cyclic for all local algebras, and consequently
separating for all local algebras for regions with nonempty causal complement. The results of~\cite{Sanders_ReehSchlieder} 
demonstrate the existence of states with partial Reeh--Schlieder
properties:  given a spacetime region in $\Mb$, one may find (suitably regular) states that are cyclic for the corresponding local algebra, on the
assumption that $\Mb$ can be deformed to a spacetime that admits a (suitably regular) state enjoying the full Reeh--Schlieder
property of being cyclic for all local algebras. 

The introduction of regular Cauchy pairs allows for a streamlined proof of Sanders' result, which we give for completeness. More significantly, we combine this proof with that of our result on the split property to demonstrate the existence of
states obeying both the split and Reeh--Schlieder properties, which give
so-called standard split inclusions~\cite{DopLon:1984}.

The properties we will consider are given as follows. Terminology 
differs from~\cite{Sanders_ReehSchlieder}.
\begin{definition}
Let $\Af:\Loc\to\CAlg$ be a locally covariant QFT and $\Mb\in\Loc$.
A state $\omega$ on $\Af(\Mb)$ is said to have the \emph{Reeh--Schlieder property}
for a regular Cauchy pair $(S,T)$ if, in the GNS representation 
($\HH,\pi,\Omega)$ of $\Af(\Mb)$ induced by
$\omega$, the GNS vector $\Omega$ is cyclic for $\pi(\Af^\kin(\Mb;D_\Mb(S)))''$ and separating
for $\pi(\Af^\kin(\Mb;D_\Mb(T)))''$. 
For brevity, we will sometimes say that \emph{$\omega$ is Reeh--Schlieder for $(S,T)$}. If $O\in\OO(\Mb)$ and $\Omega$ 
is both cyclic and separating for $\pi(\Af^\kin(\Mb;O))''$, 
we will say that $\omega$ \emph{has the Reeh--Schlieder property for $O$}.\footnote{Sanders~\cite{Sanders_ReehSchlieder} uses this
term for cyclicity alone.} 
\end{definition}
Note that we regard the separation condition as part of the Reeh--Schlieder property, which turns out to expedite the proofs below. See Corollary~\ref{cor:RS} for a formulation involving only cyclicity as a hypothesis. 
\begin{remark}\label{rem:RS}
\emph{If a vector is separating for an algebra, it is separating for any subalgebra
thereof; if it is cyclic for an algebra, it is cyclic for any algebra of which it is a subalgebra. 
Thus it is clear that if $\omega$ has the Reeh--Schlieder property for $(S,T)$ then 
it does for every $(\tilde{S},\tilde{T})$ with $(\tilde{S},\tilde{T})\prec (S,T)$.\footnote{Note the reversal of order relative to Remark~\ref{rem:split}.} Moreover, if
$O\in\OO(\Mb)$ is such that $D_\Mb(S)\subset O\subset D_\Mb(T)$, then the
GNS vector of $\omega$ is both cyclic and separating for $\pi(\Af^\kin(\Mb;O))''$, i.e.,  $\omega$ is Reeh--Schlieder for $O$.
Note that the separating property is defined at the level of the represented
algebras. If $\omega$ induces a faithful GNS representation, we would have
the stronger property that $\omega(A^*A)=0$ for $A\in\Af^\kin(\Mb;O)$ 
implies $A=0$. }
\end{remark} 
 
\begin{lemma}\label{lem:RS}
Let $\Af$ be a locally covariant QFT. Let $(S,T)$ be a regular
Cauchy pair in $\Mb\in\Loc$ and suppose 
$\psi:\Mb\to\Nb$ is Cauchy.  A state $\omega_\Nb$ on $\Af(\Nb)$ is Reeh--Schlieder for a regular Cauchy pair $(\psi(S),\psi(T))$ with $\overline{\psi(T)}\subset \psi(\Mb)$
if and only if $\Af(\psi)^*\omega_\Nb$ is
Reeh--Schlieder for $(S,T)$. (As $\Af(\psi)$ is an isomorphism, 
this implies that $\omega_\Mb$ is 
Reeh--Schlieder for $(S,T)$ if and only if $(\Af(\psi)^{-1})^*\omega_\Mb$
is Reeh--Schlieder for $(\psi(S),\psi(T))$.)
\end{lemma}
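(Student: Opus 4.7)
The plan is to imitate the argument used for Lemma~\ref{lem:split}, relying on the fact that $\Af(\psi)$ is an isomorphism by the timeslice axiom and that cyclicity and the separating property are both invariant under unitary equivalence. First I would set $\omega_\Mb=\Af(\psi)^*\omega_\Nb$ and invoke the standard uniqueness of the GNS construction to obtain a unitary $U:\HH_{\omega_\Mb}\to\HH_{\omega_\Nb}$ with $U\Omega_{\omega_\Mb}=\Omega_{\omega_\Nb}$ and satisfying the intertwining relation \eqref{eq:intertwine}. From the intertwining, $\pi_{\omega_\Nb}(\Af^\kin(\Nb;\psi(O)))''=U\pi_{\omega_\Mb}(\Af^\kin(\Mb;O))''U^{-1}$ for every nonempty $O\in\OO(\Mb)$, exactly as in Lemma~\ref{lem:split}.

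Applied to $O=D_\Mb(S)$ and $O=D_\Mb(T)$, this reduces the claim to showing that cyclicity of $\Omega_{\omega_\Nb}$ for $\pi_{\omega_\Nb}(\Af^\kin(\Nb;\psi(D_\Mb(S))))''$ is equivalent to cyclicity of $\Omega_{\omega_\Nb}$ for $\pi_{\omega_\Nb}(\Af^\kin(\Nb;D_\Nb(\psi(S))))''$, with the analogous statement for separation with $T$ in place of $S$. I would establish this by observing that the inclusion $\psi(D_\Mb(S))\hookrightarrow D_\Nb(\psi(S))$ is itself a Cauchy morphism: since $(\psi(S),\psi(T))$ is a regular Cauchy pair in $\Nb$ by Lemma~\ref{lem:Cauchypairs}, the subset $\psi(S)$ lies in a smooth spacelike Cauchy surface of both $\Nb|_{\psi(D_\Mb(S))}$ and $\Nb|_{D_\Nb(\psi(S))}$. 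Functoriality and the timeslice axiom then give $\Af^\kin(\Nb;\psi(D_\Mb(S)))=\Af^\kin(\Nb;D_\Nb(\psi(S)))$, so the corresponding weak closures coincide.

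Combining these observations gives the forward direction: if $\omega_\Nb$ is Reeh--Schlieder for $(\psi(S),\psi(T))$, then $\Omega_{\omega_\Nb}$ is cyclic for $\pi_{\omega_\Nb}(\Af^\kin(\Nb;D_\Nb(\psi(S))))''$ and separating for $\pi_{\omega_\Nb}(\Af^\kin(\Nb;D_\Nb(\psi(T))))''$, hence by the preceding identifications and unitarity $\Omega_{\omega_\Mb}$ is cyclic for $\pi_{\omega_\Mb}(\Af^\kin(\Mb;D_\Mb(S)))''$ and separating for $\pi_{\omega_\Mb}(\Af^\kin(\Mb;D_\Mb(T)))''$. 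The converse is the same chain read in reverse, so the biconditional follows. The parenthetical remark in the statement is then immediate because $\Af(\psi)$ is an isomorphism.

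The main obstacle is the algebraic identification of $\Af^\kin(\Nb;\psi(D_\Mb(S)))$ with $\Af^\kin(\Nb;D_\Nb(\psi(S)))$: it requires one to check that $\psi(S)$ remains a Cauchy surface of $D_\Nb(\psi(S))$ (which is essentially the defining property of the Cauchy development) and that the induced inclusion morphism satisfies the Cauchy morphism hypothesis so that the timeslice axiom applies. Once this is verified, the rest of the argument runs in exact parallel with Lemma~\ref{lem:split}.
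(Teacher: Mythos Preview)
Your proposal is correct and follows essentially the same route as the paper: set up the GNS unitary intertwining exactly as in Lemma~\ref{lem:split} and observe that cyclicity and the separating property transfer under unitary equivalence. The one point on which you are more explicit than the paper is the identification $\Af^\kin(\Nb;\psi(D_\Mb(S)))=\Af^\kin(\Nb;D_\Nb(\psi(S)))$ via the timeslice axiom; the paper's proof simply writes ``the result follows'' after the intertwining relation, tacitly absorbing this step, whereas you spell out that $\psi(S)$ is a common Cauchy surface for $\Nb|_{\psi(D_\Mb(S))}$ and $\Nb|_{D_\Nb(\psi(S))}$ so that the inclusion is Cauchy---a worthwhile clarification, but not a genuine departure in method.
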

\begin{proof}
As in the proof of Lemma~\ref{lem:split}, we set $\omega_\Mb=\Af(\psi)^*\omega_\Nb$,
and infer the existence of a unitary $U:\HH_{\omega_\Mb}\to
\HH_{\omega_\Nb}$ so that $U\Omega_{\omega_\Mb} = \Omega_{\omega_\Nb}$ and
$\pi_{\omega_\Nb}(\Af^\kin(\Nb;\psi(O)))'' =
U\pi_{\omega_\Mb}(\Af^\kin(\Mb;O))''U^{-1}$ for
$O\in\OO(\Mb)$. Consequently, $\Omega_{\omega_\Nb}$ is cyclic (resp., separating) for 
$\pi_{\omega_\Nb}(\Af^\kin(\Nb;\psi(O)))''$ if and only if $\Omega_{\omega_\Mb}$ 
is cyclic (resp., separating) for $\pi_{\omega_\Mb}(\Af^\kin(\Mb;O))''$. 
\end{proof}

An analogue of Theorem~\ref{thm:split} now gives a partial Reeh--Schlieder result. 
\begin{theorem}\label{thm:RS}
Suppose $\Af$ is a locally covariant QFT. Let $\Mb,\Nb\in\Loc$ have oriented-diffeomorphic Cauchy surfaces and suppose $\omega_\Nb$ is a state on $\Af(\Nb)$ 
that is Reeh--Schlieder for all regular Cauchy pairs.  
Given any regular Cauchy pair $(S_\Mb,T_\Mb)$ in $\Mb$, there is a chain of Cauchy morphisms
between $\Mb$ and $\Nb$ inducing an isomorphism $\nu:\Af(\Mb)\to\Af(\Nb)$ such
that $\nu^*\omega_\Nb$ has the Reeh--Schlieder property for $(S_\Mb,T_\Mb)$. Consequently, if $O\in\OO(\Mb)$ is relatively compact with nontrivial causal complement $O':=\Mb\setminus\overline{J_\Mb(O)}$, there is a state (formed in the same way) on $\Af(\Mb)$ with the Reeh--Schlieder property for $O$.
\end{theorem}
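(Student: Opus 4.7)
The plan is to mirror the proof of Theorem~\ref{thm:split}, with the preorder direction reversed to suit Remark~\ref{rem:RS}. Assume without loss of generality that $\Mb$ and $\Nb$ are both in standard form over a common manifold $\RR\times\Sigma$, with $(S_\Mb,T_\Mb)$ lying in the slice $\{t_\Mb\}\times\Sigma$. By Lemma~\ref{lem:step}(b), applied inside $\Mb$ with $S_{\text{inner}}\Subset S_\Mb$ and $T_\Mb\Subset T_{\text{outer}}$, I find $t_*>t_\Mb$ close to $t_\Mb$ and a regular Cauchy pair $(S_*,T_*)_{t_*}$ with $(S_\Mb,T_\Mb)\prec_\Mb (S_*,T_*)$. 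Applying Lemma~\ref{lem:step}(b) again, now inside $\Nb$, I find $t_\Nb>t_*$ and a regular Cauchy pair $(S_\Nb,T_\Nb)_{t_\Nb}$ with $(S_*,T_*)\prec_\Nb (S_\Nb,T_\Nb)$.

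I then invoke Proposition~\ref{prop:Cauchy_chain} with $t_\Mb<t_1<t_1'<t_*<t_2'<t_2<t_\Nb$ to build an interpolating spacetime $\Ib=(\RR\times\Sigma,g,\ogth,\tgth)$ and Cauchy morphism chain $\Mb\xleftarrow{\alpha}\Pb\xrightarrow{\beta}\Ib\xleftarrow{\gamma}\Fb\xrightarrow{\delta}\Nb$. Exactly as in the proof of Theorem~\ref{thm:split}, the lightcone containment properties of $g$ (namely, $g$-causal vectors being $g_\Mb$-causal on $(-\infty,t_2')\times\Sigma$ and $g_\Nb$-causal on $(t_1',\infty)\times\Sigma$) allow one to extend restrictions of inextendible $g$-causal curves to inextendible $g_\Mb$- or $g_\Nb$-causal curves in the appropriate slabs, and thereby conclude $(S_\Mb,T_\Mb)\prec_\Ib (S_*,T_*)$ and $(S_*,T_*)\prec_\Ib (S_\Nb,T_\Nb)$; transitivity gives $(S_\Mb,T_\Mb)\prec_\Ib (S_\Nb,T_\Nb)$.

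Since $\omega_\Nb$ is Reeh--Schlieder for $(S_\Nb,T_\Nb)$ in $\Nb$, two applications of Lemma~\ref{lem:RS} (using $\delta$ and $\gamma^{-1}$, with $\overline{T_\Nb}\Subset F$ supplying the required hypothesis) yield that $(\Af(\gamma)^{-1})^*\Af(\delta)^*\omega_\Nb$ is Reeh--Schlieder for $(S_\Nb,T_\Nb)$ as a regular Cauchy pair of $\Ib$. Remark~\ref{rem:RS} together with $(S_\Mb,T_\Mb)\prec_\Ib (S_\Nb,T_\Nb)$ upgrades this to Reeh--Schlieder for $(S_\Mb,T_\Mb)$ in $\Ib$, and two further applications of Lemma~\ref{lem:RS} through $\beta$ and $\alpha^{-1}$ transport the property across to $\nu^*\omega_\Nb$ on $\Af(\Mb)$, where $\nu=\Af(\delta)\circ\Af(\gamma)^{-1}\circ\Af(\beta)\circ\Af(\alpha)^{-1}$, establishing the main claim.

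For the corollary it suffices to produce a regular Cauchy pair $(S,T)$ with $D_\Mb(S)\subset O\subset D_\Mb(T)$, since Remark~\ref{rem:RS} then gives the Reeh--Schlieder property for $O$. Such a pair is constructed by choosing a Cauchy surface $\Sigma$ of $\Mb$ sufficiently close to $\overline{O}$ that the compact causal shadow of $\overline{O}$ on $\Sigma$ has nonempty complement in $\Sigma$ --- the nontrivial causal complement hypothesis prevents the pathological case $\Sigma\subset\overline{O}$ --- taking $T$ to be a relatively compact open neighbourhood of the shadow, and $S$ a small open spacelike disk around some $p\in O\cap\Sigma$ with $D_\Mb(S)\subset O$ and $\overline{S}\subset T$. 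The main obstacle throughout is the geometric verification that the preorder survives the metric interpolation; however this is a direct dual of the corresponding argument in the proof of Theorem~\ref{thm:split}, reducing to careful bookkeeping of the lightcone containment slabs together with the standard prolongation of $g_\Mb$- or $g_\Nb$-causal curve segments to inextendible ones.
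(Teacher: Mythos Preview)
Your proof is correct and follows essentially the same route as the paper: mirror the split-property argument with the preorder reversed (using Lemma~\ref{lem:RS} and Remark~\ref{rem:RS} in place of their split counterparts), then for the corollary produce a regular Cauchy pair $(S,T)$ with $D_\Mb(S)\subset O\subset D_\Mb(T)$ and invoke Remark~\ref{rem:RS}. The only place the paper is sharper is the choice of $\Sigma$ in the corollary: one must explicitly arrange that $\Sigma$ meets both $O$ and $O'$ (the paper gives a short argument for this and then takes $T=J_\Mb(O)\cap\Sigma$), whereas your ``sufficiently close to $\overline{O}$'' phrasing does not by itself guarantee $O\cap\Sigma\neq\emptyset$ or that the shadow has nonempty complement.
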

\begin{proof}
The first part of the argument is identical to that of Theorem~\ref{thm:split}, except
that we replace $\prec$ by $\succ$, and `split' by `Reeh--Schlieder' on every occasion, and use Lemma~\ref{lem:RS} and Remark~\ref{rem:RS} in place of
Lemma~\ref{lem:split} and Remark~\ref{rem:split}. For the last part, 
choose any smooth spacelike Cauchy surface $\Sigma$ intersecting $O$
and $O'$;\footnote{The existence of such a $\Sigma$ may be seen, for example, as follows: there certainly exist smooth spacelike Cauchy surfaces $\Sigma_1$ and $\Sigma_2$ that intersect, respectively, $O$ and $O'$ in open sets. Choosing compact submanifolds-with-boundary $H_1$ (resp., $H_2$) of $\Sigma_1$ (resp., $\Sigma_2$) contained in $O\cap\Sigma_1$ (resp., $O'\cap\Sigma_2$), the union $H_1\cup H_2$ is acausal as well as being a spacelike compact submanifold-with-boundary
in $\Mb$ and the existence of $\Sigma$ follows from~\cite[Thm~1.1]{Bernal:2005qf}.} then there certainly exist open relatively compact subsets $S$ and $T$ of $\Sigma$ so that
$(S,T)$ is a regular Cauchy pair with $D_\Mb(S)\subset O\subset D_\Mb(T)$ (e.g., take $T=J_\Mb(O)\cap\Sigma$),
and we apply the first part of the result along with Remark~\ref{rem:RS}.
\end{proof}

\begin{remark}\label{rem:multiRS}\emph{
For exactly the same reason as in Remark~\ref{rem:multisplit}, Theorem~\ref{thm:RS}
may be extended to yield a state that has the Reeh--Schlieder property simultaneously
for finitely many regular Cauchy pairs specified in a common Cauchy surface of $\Mb$.}
\end{remark}

The following result reproduces the main statement of~\cite[Thm 4.1]{Sanders_ReehSchlieder}.
\begin{corollary} \label{cor:RS}
Let $\Af$ be a locally covariant QFT and assume the geometric hypotheses of Theorem~\ref{thm:RS}.
Suppose that $\omega_\Nb$ has the property that its GNS vector is cyclic
for each $\pi_{\omega_\Nb}(\Af^\kin(\Nb;O))''$ indexed by a nonempty relatively compact $O\in\OO(\Nb)$ with nontrivial causal complement $O'$. Then the conclusions of Theorem~\ref{thm:RS} hold.
\end{corollary}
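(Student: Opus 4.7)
The plan is to use the hypothesis to show that $\omega_\Nb$ obeys the Reeh--Schlieder property (combining cyclicity and separation, as in this paper) for every regular Cauchy pair in $\Nb$, so that Theorem~\ref{thm:RS} applies directly and delivers both of its conclusions without further work.

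Fix a regular Cauchy pair $(S,T)$ in $\Nb$, lying in a smooth spacelike Cauchy surface $\Sigma$. The set $D_\Nb(S)$ satisfies the hypotheses placed on $O$ in the corollary: it is nonempty (since $S$ is), lies in $\OO(\Nb)$ by the remarks immediately following the definition of a regular Cauchy pair, and is relatively compact because $\overline S$ is a compact subset of $\Sigma$ and the Cauchy development of such a set is compact in a globally hyperbolic spacetime. Nontriviality of its causal complement follows from the assumption that $\overline T$ has nonempty complement in $\Sigma$: choosing any nonempty open $U\subset\Sigma\setminus\overline T$, the set $D_\Nb(U)$ is open and nonempty, and no inextendible causal curve can meet both $D_\Nb(S)$ and $D_\Nb(U)$, since this would force it to intersect the disjoint subsets $S\subset T$ and $U$ of the Cauchy surface $\Sigma$, whereas a Cauchy surface is met exactly once. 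The hypothesis therefore yields cyclicity of the GNS vector for $\pi_{\omega_\Nb}(\Af^\kin(\Nb;D_\Nb(S)))''$.

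To obtain the separation property for $\pi_{\omega_\Nb}(\Af^\kin(\Nb;D_\Nb(T)))''$ I use the standard fact that a cyclic vector for a von Neumann algebra is separating for its commutant. I therefore construct an auxiliary regular Cauchy pair $(\hat S,\hat T)$ on the same Cauchy surface, with $\overline{\hat T}\subset\Sigma\setminus\overline T$ and $\hat T$ having nonempty complement in $\Sigma$; such a pair exists in both the compact and noncompact cases for $\Sigma$ because $\Sigma\setminus\overline T$ is a nonempty open set. The argument of the previous paragraph, applied to $\hat S$ in place of $S$, produces cyclicity of the GNS vector for $\pi_{\omega_\Nb}(\Af^\kin(\Nb;D_\Nb(\hat S)))''$. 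Since $\overline{\hat T}$ and $\overline T$ are disjoint in $\Sigma$, the same single-crossing reasoning shows that $D_\Nb(\hat T)$ and $D_\Nb(T)$ are causally disjoint, and Einstein causality then gives $\pi_{\omega_\Nb}(\Af^\kin(\Nb;D_\Nb(\hat S)))''\subset\pi_{\omega_\Nb}(\Af^\kin(\Nb;D_\Nb(T)))'$, from which the required separation follows.

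Combining these two observations, $\omega_\Nb$ has the Reeh--Schlieder property for every regular Cauchy pair in $\Nb$, and Theorem~\ref{thm:RS} applies unchanged to yield both of its conclusions. The only substantive steps are the two geometric verifications above, both of which ultimately rest on the achronality of Cauchy surfaces together with the assumption, built into the definition of a regular Cauchy pair, that $\overline T$ has nonempty exterior; no further analytic input beyond what is already packaged in the statement of Theorem~\ref{thm:RS} is needed.
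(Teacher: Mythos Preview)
Your proof is correct and follows essentially the same route as the paper: reduce to showing that $\omega_\Nb$ is Reeh--Schlieder for every regular Cauchy pair, obtain cyclicity for $D_\Nb(S)$ directly from the hypothesis, and deduce separation for $D_\Nb(T)$ by exhibiting a causally disjoint region on which the GNS vector is cyclic and invoking Einstein causality together with the cyclic/separating duality. The paper simply chooses any nonempty relatively compact $O\in\OO(\Nb)$ in the causal complement of $T_\Nb$, whereas you build an auxiliary regular Cauchy pair $(\hat S,\hat T)$ in the same Cauchy surface; this is slightly more structure than needed but not a different idea, and your more explicit verification that $D_\Nb(S)$ meets the hypotheses of the corollary is a welcome addition that the paper leaves implicit.
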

\begin{proof}
We need only prove that $\omega_\Nb$ is Reeh--Schlieder for all regular
Cauchy pairs $(S_\Nb,T_\Nb)$ of $\Nb$. By hypothesis, the GNS vector $\Omega_{\omega_\Nb}$
is cyclic for $\pi_{\omega_\Nb}(\Af^\kin(\Nb;D_\Nb(S_\Nb)))''$, so we need
only prove that it is separating for $\pi_{\omega_\Nb}(\Af^\kin(\Nb;D_\Nb(T_\Nb)))''$. 
Choose any nonempty relatively compact $O\in\OO(\Nb)$ contained in the causal complement $T_\Nb'$ of $T_\Nb$ (so $O$ itself also has nontrivial causal complement $O'$ containing $T_\Nb$), whereupon  $\Omega_{\omega_\Nb}$ is cyclic for $\pi_{\omega_\Nb}(\Af^\kin(\Nb;O))''$, 
and hence separating for (any subalgebra of) 
$\pi_{\omega_\Nb}(\Af^\kin(\Nb;O))'$. By Einstein causality, 
this includes $\pi_{\omega_\Nb}(\Af^\kin(\Nb;D_\Nb(T_\Nb)))$
and its weak closure.
\end{proof}
For a theory with states $\Xf:\Loc\to\CAlgSts$, we may argue further that
the state $\nu^*\omega_\Nb$ belongs to $\Sf(\Mb)$. If one assumes
that each $\Sf(\Mb)$ is a full local-equivalence class then further
conclusions on the existence of states that are Reeh--Schlieder for 
arbitrary globally hyperbolic regions of $\Mb$ may be obtained --
see \cite{Sanders_ReehSchlieder}, which also discusses various applications of
these results. 

We have emphasized that the proofs of Theorems~\ref{thm:split}
and~\ref{thm:RS} run in close analogy. Indeed, they may be combined. 
\begin{theorem} \label{thm:RSsplit}
Assume the hypotheses of Theorem~\ref{thm:split}. If, in addition, 
$\omega_\Nb$ is Reeh--Schlieder for all regular Cauchy pairs in $\Nb$, then the state
$\nu^*\omega_\Nb$ has both the Reeh--Schlieder and split properties for $(S_\Mb,T_\Mb)$.
\end{theorem}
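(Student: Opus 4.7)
The plan is to combine the proofs of Theorems~\ref{thm:split} and~\ref{thm:RS} into a single argument that transfers both properties along one common chain of Cauchy morphisms, so that they both attach to the same state on $\Af(\Mb)$. The one genuine obstacle is that the two arguments require preorder relations that run in opposite directions: for split we want a source pair $(S_\Nb^S,T_\Nb^S)$ in $\Nb$ with $(S_\Nb^S,T_\Nb^S)\prec_\Ib (S_\Mb,T_\Mb)$, whereas for Reeh--Schlieder we want $(S_\Mb,T_\Mb)\prec_\Ib (S_\Nb^R,T_\Nb^R)$ (cf.\ Remarks~\ref{rem:split} and~\ref{rem:RS}). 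Since $\omega_\Nb$ enjoys both properties at \emph{every} regular Cauchy pair of $\Nb$, the remedy is simply to use two distinct source pairs in $\Nb$ that straddle $(S_\Mb,T_\Mb)$ from opposite sides of $\prec_\Ib$ inside a single interpolating spacetime.

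Put $\Mb$ and $\Nb$ in standard form over $\RR\times\Sigma$ with $(S_\Mb,T_\Mb)$ lying in $\{t_\Mb\}\times\Sigma$. A first application of Lemma~\ref{lem:step}(b) produces a time $t_*>t_\Mb$ and regular Cauchy pairs $(S_*^S,T_*^S)$, $(S_*^R,T_*^R)$ in $\{t_*\}\times\Sigma$ with
\begin{equation}
(S_*^S,T_*^S)\prec_\Mb (S_\Mb,T_\Mb)\prec_\Mb (S_*^R,T_*^R),
\end{equation}
and a second application of Lemma~\ref{lem:step}(b), this time in $\Nb$ and carried out for both of these intermediate pairs (taking the minimum of the two $\epsilon$'s), produces a time $t_\Nb>t_*$ and regular Cauchy pairs $(S_\Nb^S,T_\Nb^S)$, $(S_\Nb^R,T_\Nb^R)$ in $\{t_\Nb\}\times\Sigma$ satisfying
\begin{equation}
(S_\Nb^S,T_\Nb^S)\prec_\Nb (S_*^S,T_*^S),\qquad (S_*^R,T_*^R)\prec_\Nb (S_\Nb^R,T_\Nb^R).
\end{equation}
Now invoke Proposition~\ref{prop:Cauchy_chain} with interpolation parameters $t_\Mb<t_1<t_1'<t_*<t_2'<t_2<t_\Nb$, yielding an interpolating spacetime $\Ib$ and a chain of Cauchy morphisms $\Mb\leftarrow\Pb\to\Ib\leftarrow\Fb\to\Nb$. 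Exactly as in the proof of Theorem~\ref{thm:split}, every $g$-timelike vector below $t_2'$ is $g_\Mb$-timelike and every $g$-timelike vector above $t_1'$ is $g_\Nb$-timelike, so all four $\prec_\Mb$ and $\prec_\Nb$ relations at $t_*$ lift to $\prec_\Ib$ relations. Transitivity of $\prec_\Ib$ then yields
\begin{equation}
(S_\Nb^S,T_\Nb^S)\prec_\Ib (S_\Mb,T_\Mb)\prec_\Ib (S_\Nb^R,T_\Nb^R).
\end{equation}

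The chain of Cauchy morphisms induces an isomorphism $\nu:\Af(\Mb)\to\Af(\Nb)$ and so pulls $\omega_\Nb$ back to a single state $\nu^*\omega_\Nb$ on $\Af(\Mb)$. Because $\omega_\Nb$ has the split property for $(S_\Nb^S,T_\Nb^S)$, two applications of Lemma~\ref{lem:split} along the chain, combined with the first inclusion in~\eqref{eq:splitOi} via Remark~\ref{rem:split}, equip $\nu^*\omega_\Nb$ with the split property for $(S_\Mb,T_\Mb)$; because $\omega_\Nb$ has the Reeh--Schlieder property for $(S_\Nb^R,T_\Nb^R)$, two applications of Lemma~\ref{lem:RS} combined with Remark~\ref{rem:RS} equip $\nu^*\omega_\Nb$ with the Reeh--Schlieder property for the same pair. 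Both conclusions hold for the one state $\nu^*\omega_\Nb$ precisely because they are effected through the same Cauchy chain; the opposing directions of the two preorder arguments are absorbed by the two independently chosen source pairs in $\Nb$.
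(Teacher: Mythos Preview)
Your proof is correct and follows essentially the same strategy as the paper's own argument: choose two intermediate pairs at $t_*$ (one preceding and one following $(S_\Mb,T_\Mb)$ in $\prec_\Mb$), then two source pairs at $t_\Nb$ in $\Nb$ straddling those, build a single interpolating spacetime, and transport the split and Reeh--Schlieder properties along the same Cauchy chain using the respective lemmas and remarks. The only cosmetic difference is notation; the paper writes the two source pairs as $(\leftidx{_\Nb}{}{S},\leftidx{_\Nb}{}{T})$ and $(S_\Nb,T_\Nb)$ rather than your $(S_\Nb^S,T_\Nb^S)$ and $(S_\Nb^R,T_\Nb^R)$, and it is slightly more explicit that four (not two) applications of Lemma~\ref{lem:split} (resp.\ Lemma~\ref{lem:RS}) are needed to traverse the full chain $\Nb\to\Ib\to\Mb$.
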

\begin{proof}
We combine the proofs of Theorems~\ref{thm:split} and~\ref{thm:RS}.
The value $t_*$ may be chosen so that $\{t_*\}\times\Sigma$ contains 
regular Cauchy pairs $(\sS ,\sT)$ and $(S_*,T_*)$ with
\begin{equation}\label{eq:splitRSM}
(\sS ,\sT)\prec_\Mb (S_\Mb,T_\Mb)\prec_\Mb (S_*,T_*) ,
\end{equation}
while $t_\Nb>t_*$ may be chosen so that $\{t_\Nb\}\times\Sigma$
contains regular Cauchy pairs $(S_\Nb,T_\Nb)$ and $(\leftidx{_\Nb}{}{S},\leftidx{_\Nb}{}{T})$ such that 
\begin{equation}\label{eq:splitRSN}
(\leftidx{_\Nb}{}{S},\leftidx{_\Nb}{}{T})\prec_\Nb (\sS ,\sT),\qquad
(S_*,T_*)\prec_\Nb (S_\Nb,T_\Nb).
\end{equation}
Constructing the interpolating metric as in the proof of
Theorem~\ref{thm:split}, the orderings \eqref{eq:splitRSM} and~\eqref{eq:splitRSN}
hold with $\prec_\Mb$ and $\prec_\Nb$ replaced by $\prec_\Ib$, and we may
deduce 
\begin{equation}\label{eq:splitRSI}
(\leftidx{_\Nb}{}{S},\leftidx{_\Nb}{}{T})\prec_\Ib (S_\Mb,T_\Mb)
\prec_\Ib  (S_\Nb,T_\Nb).
\end{equation}
Now $\omega_\Nb$ has the Reeh--Schlieder property for $(S_\Nb,T_\Nb)$ 
and is split for $(\leftidx{_\Nb}{}{S},\leftidx{_\Nb}{}{T})$ in $\Nb$,
and hence the same is true in $\Ib$ for $(\Af(\delta)\circ\Af(\gamma)^{-1})^*\omega_\Nb$. 
By \eqref{eq:splitRSI} and Remarks~\ref{rem:split} and~\ref{rem:RS}, 
$(\Af(\delta)\circ\Af(\gamma)^{-1})^*\omega_\Nb$ is both
Reeh--Schlieder and split for $(S_\Mb,T_\Mb)$, again as a regular Cauchy pair in $\Ib$. Hence $\nu^*\omega_\Nb$
is both Reeh--Schlieder and split for $(S_\Mb,T_\Mb)$ in $\Mb$. 
\end{proof}
\begin{remark}\label{rem:multisplitRS}\emph{%
This result also extends to the case of finitely many Cauchy pairs in a common Cauchy surface.}
\end{remark}

\subsection{Standard split inclusions and applications}

In the situation of Theorem~\ref{thm:RSsplit}, 
but now writing $(S,T)$ for $(S_\Mb,T_\Mb)$, let 
$\tilde{S}$ be an open subset of the Cauchy surface
containing $S$ and $T$ such that 
$\overline{\tilde{S}}\subset T\setminus \overline{S}$. Then
$(\tilde{S},T)$ is a regular Cauchy pair lying in a common Cauchy surface with $(S,T)$. Applying Remark~\ref{rem:multisplitRS},
the construction of $\nu$ may be arranged so that   $\omega=\nu^*\omega_\Nb$ has the Reeh--Schlieder and split properties for both $(S,T)$ and $(\tilde{S},T)$. Writing
$(\HH,\pi,\Omega)$ for the corresponding GNS representation, we define 
\begin{equation}
\Rgth_U =\pi(\Af^\kin(\Mb;D_\Mb(U)))'', 
\end{equation}
where $U$ is any of $S,\tilde{S},T$.
So far, we have $\Rgth_S\subset\Ngth\subset \Rgth_T$ and that
$\Omega$ is cyclic for $\Rgth_S$ (hence also for $\Ngth$ and $\Rgth_T$).
Moreover $\Omega$ is cyclic for $\Rgth_{\tilde{S}}$, and therefore also for
$\Rgth_T\wedge \Rgth_S'$ (using Einstein causality and causal
disjointness of $S$ and $\tilde{S}$). On the other hand,  
$\Omega$ is separating for $\Rgth_T$ and, therefore, for its subalgebras $\Rgth_S$ and
$\Rgth_T\wedge \Rgth_S'$.
In summary, the inclusion $\Rgth_S\subset \Rgth_T$ is split, and $\Omega$ is
cyclic and separating for each of $\Rgth_S$, $\Rgth_T$ and
$\Rgth_T\wedge \Rgth_S'$.\footnote{Note that the split property
for $(\tilde{S},T)$ was not used in this argument.} In the terminology of \cite{DopLon:1984},
the triple $(\Rgth_S,\Rgth_T,\Omega)$ is, therefore, a \emph{standard split inclusion}. 

Excluding a trivial situation in which $\Rgth_T=\CC\II$ (which can
only arise if the GNS space $\HH$ is one-dimensional)  it follows that
both $\Rgth_S$ and $\Rgth_T$ are properly
infinite von Neumann algebras with separable preduals, and the
Hilbert space $\HH$ is infinite-dimensional and separable~\cite[Prop.~1.6]{DopLon:1984}.\footnote{To bring out
the main point: $\Omega$ is a faithful normal state on $\Ngth$,
which is therefore countably decomposable, and hence (by
virtue of being a type $\text{I}$ factor) isomorphic to $\BB(\KK)$
where $\KK$ has countable dimension~\cite[7.6.46]{KadisonRingrose:iv}. That is, $\Ngth$ is of type $\text{I}_\infty$. As $\Omega$ is
cyclic for $\Ngth$, separability of $\HH$ follows.}
There is also a unitary $W:\HH\to\HH\otimes\HH$ with the properties
\begin{align}
W A B' W^{-1} &= A\otimes B' \qquad (A\in\Rgth_S,~B'\in\Rgth_T') \nonumber\\
W\Rgth_S W^{-1}& = \Rgth_S\otimes\II_{\HH} \nonumber\\ 
W\Rgth_T' W^{-1}& = \II_{\HH}\otimes\Rgth_T' \nonumber\\ 
W\Rgth_T W^{-1}& = \BB(\HH)\otimes\Rgth_T  
\end{align}
and we may take $\Ngth$ to be the `canonical type $\text{I}$
factor'
\begin{equation}
\Ngth= W^{-1}\left(\BB(\HH)\otimes \II_{\HH}\right) W.
\end{equation}
It is conventional to denote the split inclusion by $\Lambda=(\Rgth_S,\Rgth_T,\Omega)$.

As is well known, various consequences follow from this situation
(see, e.g., \cite[\S V.5]{Haag}).
We give some representative applications. 

\paragraph{Statistical independence} The algebras $\Rgth_S$ and $\Rgth_T'$ are
statistically independent in the $W^*$-sense,\footnote{See~\cite{Summers:1990} for discussion of
the relation between $C^*$- and $W^*$-senses of statistical independence.} because any pair of normal states $\varphi_S$ 
and $\varphi_T'$ on these algebras
with respective density matrices $\rho_S$ and $\rho_T'$  
 induces a normal state $\varphi$ with density matrix
$\rho= W^{-1} \rho_S\otimes\rho_{T}' W$ so that
\begin{equation}
\varphi(AB') = \Tr \rho AB' = \Tr \left((\rho_S A)\otimes(\rho_{T}'B')\right) =\varphi_S(A)\varphi_{T}'(B')
\end{equation}
for $A\in\Rgth_S$, $B'\in\Rgth_T'$. 

\paragraph{Strictly localized states} States of the form $\Psi = W^{-1} \psi\otimes\Omega$ ($\psi\in\HH$, $\|\psi\|=1$) may be
regarded as states strictly localized in $D_\Mb(T)$ relative to $\Omega$, because 
\begin{equation}
\ip{\Psi}{B'\Psi} = \ip{\psi\otimes\Omega}{(\II_{\HH}\otimes B')\psi\otimes\Omega} =
\ip{\Omega}{B'\Omega}
\end{equation}
for all $B'\in\Rgth_T'$. 

\paragraph{Local implementation of gauge symmetries}
In locally covariant QFT, the global gauge group of a theory
$\Af$ may be identified with the automorphism group $\Aut(\Af)$, 
the group of natural isomorphisms of $\Af$ with itself~\cite{Fewster:gauge}.

Suppose that the state $\omega_\Nb$ is gauge invariant in the sense
that $\omega_\Nb\circ\zeta_\Nb=\omega_\Nb$ for all $\zeta\in\Aut(\Af)$,
where $\zeta_\Nb$ is the component of natural transformation $\zeta$
in spacetime $\Nb$.  
Then $\omega=\nu^*\omega_\Nb$ is gauge invariant, $\omega\circ\zeta_\Mb=\omega$ for all $\zeta\in\Aut(\Af)$,
by naturality and the definition of $\nu$, so the
GNS representation carries a unitary implementation $\zeta\mapsto U(\zeta)$ of the gauge
group $\Aut(\Af)$ under which $\Omega$ is fixed.  
Then we may define 
\begin{equation}
U_\Lambda(\zeta)=W^{-1} (U(\zeta)\otimes\II_{\HH}) W,
\end{equation}
which provides a second representation of $\Aut(\Af)$, implemented by 
unitaries belonging to $\Ngth\subset\Rgth_T$, with
\begin{equation}
U_\Lambda(\zeta)AB' U_\Lambda(\zeta)^{-1} = W^{-1}\left( U(\zeta)AU(\zeta)^{-1}\otimes B'\right) W
=U(\zeta)AU(\zeta)^{-1} B' 
\end{equation}
for $A\in\Rgth_S$, $B'\in\Rgth_T'$.  
In other words, $U_\Lambda$ is a local representation of the gauge group on $\Rgth_S$, 
leaving the commutant of $\Rgth_T$ fixed. The representation is strongly continuous
(with respect to a given topology on $\Aut(\Af)$) if and only if $U$ is, and this
construction produces local generators for the gauge group and thus a local current algebra
\cite{DopLon:1983}. In principle this discussion could be developed further
to incorporate geometric symmetries of the Cauchy surface (cf.~\cite{BucDopLon:1986}) 
by modifying the construction of the interpolating spacetimes
to ensure that the isometry is preserved throughout, and starting from an invariant
state on $\Nb$. 

\paragraph{Hyperfiniteness and type $\text{III}_1$} 
Suppose $T$ can be approximated
from within by subsets $S_k\subset T$ so that each $(S_k,S_{k+1})$ is a regular Cauchy pair, $T=\bigcup_{k\in\NN} S_k$ and
\begin{equation}
\Rgth_T = 
\bigvee_{k\in\NN} \Rgth_{S_k}  .
\end{equation}
(This inner continuity would be expected if, for example, 
the von Neumann algebras are generated by a system of fields, 
cf.~\cite{BucDAnFre:1987}; alternatively, it might be imposed
as an additivity assumption.)
Then because each inclusion $\Rgth_{S_k}\subset \Rgth_{S_{k+1}}$ 
is split there is an increasing family of type $\text{I}$ factors $\Ngth_k$ so that
\begin{equation}
\Rgth_T = \bigvee_{k\in\NN} \Ngth_k
\end{equation}
and as $\HH$ is separable, $\Rgth_T$ is seen to be hyperfinite.
If, in addition, the factors appearing in the central decomposition 
of $\Rgth_T$ are known to be of type $\text{III}_1$, 
as would happen given a suitable scaling limit~\cite[Thm 16.2.18]{BaumWollen:1992} (based on \cite{Fredenhagen:1985})
then $\Rgth_T$ is isomorphic to the unique hyperfinite $\text{III}_1$
factor~\cite{Haagerup:1987} (up to a tensor product with the centre of $\Rgth_T$).  

\bigskip
Now consider the situation of a theory with states $\Xf=(\Af,\Sf):\Loc\to\CAlgSts$
obeying local quasi-equivalence, and so that $\omega_\Nb\in\Sf(\Nb)$. 
Then the state $\omega$ discussed above lies in $\Sf(\Mb)$ and the
GNS representation $(\tilde{\HH},\tilde{\pi},\tilde{\Omega})$ of any state $\tilde{\omega}\in\Sf(\Mb)$ restricts to representations of $\Af^\kin(\Mb;D_\Mb(S))$ 
and  $\Af^\kin(\Mb;D_\Mb(T))$ that are quasi-equivalent to those obtained by restricting the GNS representation of $\omega$. As already mentioned, the corresponding
von Neumann algebras $\tilde{\Rgth}_S$, $\tilde{\Rgth}_T$ are
split, though the GNS vector is not necessarily a standard vector. 
However, some elements of the discussion above hold true as
a result of the quasi-equivalence: for instance, the Hilbert space $\tilde{\HH}$ is separable (cf.\ the proof of \cite[Thm 2.4.26]{BratRob}) and $\tilde{\Rgth}_T$ contains unitaries implementing the
global gauge group on $\tilde{\Rgth}_S$, and leaving $\tilde{\Rgth}_T'$
fixed. Of course, the type of the local von Neumann algebras is preserved, because they are isomorphic.

Further applications of the split property to the issue of independence of local
algebras can be found in~\cite{Summers:1990,Summers:2009}.\footnote{Terminology 
in these references differs
in some respects from ours, which follows~\cite{DopLon:1984}; refs.~\cite{Summers:1990,Summers:2009} refer to a pair $(\Rgth_1,\Rgth_2)$ 
as split if $\Rgth_1\subset\Ngth\subset\Rgth_2'$ for some type $\text{I}$ factor $\Ngth$.} 
A weaker condition than the split property, namely \emph{intermediate factoriality},
is studied in \cite{BrFrVe03}, where various consequences are derived. 
The interpretative framework for quantum systems described
by funnels of type $\text{I}_\infty$
factors has recently been addressed in~\cite{BucSto:2014}.
Finally, we comment on the version of the split property used
in~\cite{BrFrImRe:2014} in a discussion of the tensorial structure
of locally covariant QFTs. This differs from ours in that the type $\text{I}$ von Neumann factor is
required to lie between the $C^*$-algebras $\Af^\kin(\Mb;O)$ and $\Af(\Mb)$ for 
connected $O\in\OO(\Mb)$ with compact closure, rather than between the von Neumann algebras of nested relatively compact regions in
suitable representations. An additional continuity requirement is
also imposed in~\cite{BrFrImRe:2014}. While it seems likely that 
one could at least partly address
this version of the split property with our deformation argument, we will not do this here. Alternatively one could 
investigate whether the results of~\cite{BrFrImRe:2014} hold
under the version of the split property established here.

\subsection{The distal split property}\label{sect:distal}

The \emph{distal split property} requires only that inclusions
of nested local algebras be split when the outer region is
sufficiently larger than the inner. In Minkowski space, 
for instance, the \emph{splitting distance} $d(r)$ is 
defined~\cite{DAnDopFreLon:1987} so that, for $r>0$, the inclusion
$\Rgth_{B_r}\subset \Rgth_{B_{r+d}}$ is split for all $d>d(r)$
and non-split for $d<d(r)$, 
where $B_r$ is the open ball of radius $r$ in the $t=0$
hyperplane, centred at the origin.\footnote{
One might more strongly insist on $(\Rgth_{B_r},\Rgth_{B_{r+d}},\Omega)$ being a standard
split inclusion, for cyclic and separating vector $\Omega$.} 
 An example is given 
in~\cite[Thm 4.3]{DAnDopFreLon:1987}
that consists of infinitely many independent scalar fields of masses
$m_n=(2d_0)^{-1}\log(n+1)$ and for which the splitting distance obeys  $d_0\le d(r) \le 2d_0$ for all $r>0$. 

In this subsection we generalise the notion of the distal split property and splitting distance to the curved spacetime context and show that, 
on the one hand, a weak version of the distal split property is amenable
to deformation arguments, while on the other, that for models obeying
the timeslice axiom and local quasi-equivalence, stronger versions of
the distal split property actually imply versions of the split property. 
In particular, if (any version of) the distal split property holds, then
the splitting distance for any open ball vanishes, while if \emph{uniform
distal splitting} holds then the splitting distance vanishes for all
open relatively compact sets. These results entail that models such as that of~\cite[Thm 4.3]{DAnDopFreLon:1987}, if extended to a locally covariant theory, must fail either to obey the timeslice or local quasi-equivalence axioms. 
As the repetition of the phrase `open relatively compact' would
become tedious in this subsection, we use the abbreviation \emph{orc}
instead.  

Our first result applies deformation arguments to a weak version of the distal split property. 

\begin{theorem} \label{thm:weak_distal}
Suppose $\Af$ is a locally covariant QFT.
Let $\Mb$ and $\Nb$ have oriented-diffeomorphic Cauchy surfaces, 
and suppose $\Sigma_\Nb$ is a particular smooth spacelike Cauchy surface of $\Nb$. Suppose $\omega_\Nb$ is a state on $\Af(\Nb)$ with the property that to each orc $S_\Nb\subset \Sigma_\Nb$
with nonempty exterior, there exists a regular Cauchy pair $(S_\Nb,T_\Nb)$ in $\Sigma_\Nb$ for which $\omega_\Nb$ is split. Then, for any smooth spacelike Cauchy surface $\Sigma_\Mb$ of $\Mb$ and
any orc $S_\Mb\subset \Sigma_\Mb$ with nonempty
exterior, there exist a regular Cauchy pair $(S_\Mb,T_\Mb)$ in $\Sigma_\Mb$ and an isomorphism $\nu:\Af(\Mb)\to\Af(\Nb)$ so that
$\nu^*\omega_\Nb$ is split for $(S_\Mb,T_\Mb)$.
\end{theorem}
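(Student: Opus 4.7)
The proof should parallel that of Theorem~\ref{thm:split} closely; the single novel feature is that the hypothesis at $\Nb$ only lets us prescribe the inner set $S_\Nb$ of a split regular Cauchy pair, so $T_\Nb$ is returned by the hypothesis and $T_\Mb$ must be constructed a posteriori around $T_\Nb$.

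First, by Proposition~\ref{prop:BS} put $\Mb$ and $\Nb$ in standard form on a common manifold $\RR\times\Sigma$ with $\Sigma_\Mb=\{t_\Mb\}\times\Sigma$ and $\Sigma_\Nb=\{t_\Nb\}\times\Sigma$, $t_\Mb<t_\Nb$. Pick orc enlargements $S_\Mb\Subset S_*\Subset S_\Nb$ with nonempty exterior; feed $S_\Nb$ into the hypothesis at $\Nb$ to obtain an orc $T_\Nb$ with nonempty exterior such that $(S_\Nb,T_\Nb)$ is a regular Cauchy pair in $\Sigma_\Nb$ split under $\omega_\Nb$; then, using that $T_\Nb$ has nonempty exterior, pick orc enlargements $T_\Nb\Subset T_*\Subset T_\Mb$, also with nonempty exterior. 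The resulting nestings $\overline{S_\Mb}\subset S_*$, $\overline{S_*}\subset S_\Nb$, $\overline{S_\Nb}\subset T_\Nb$, $\overline{T_\Nb}\subset T_*$, $\overline{T_*}\subset T_\Mb$ ensure that $(S_\Mb,T_\Mb)$, $(S_*,T_*)$ and $(S_\Nb,T_\Nb)$ are all regular Cauchy pairs.

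Applying Lemma~\ref{lem:step}(a) in $\Mb$ to $S_\Mb\Subset S_*\Subset T_*\Subset T_\Mb$ at reference $t_\Mb$, and in $\Nb$ to $S_*\Subset S_\Nb\Subset T_\Nb\Subset T_*$ at reference $t_\Nb$, yields constants $\epsilon_\Mb,\epsilon_\Nb>0$ with
\begin{equation*}
(S_*,T_*)_{t_*}\prec_\Mb(S_\Mb,T_\Mb)_{t_\Mb}\quad\text{and}\quad(S_\Nb,T_\Nb)_{t_\Nb}\prec_\Nb(S_*,T_*)_{t_*}
\end{equation*}
whenever $|t_*-t_\Mb|<\epsilon_\Mb$ and $|t_*-t_\Nb|<\epsilon_\Nb$. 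Since $\epsilon_\Mb,\epsilon_\Nb$ depend only on the sets and the intrinsic optical metrics of $\Sigma_\Mb$ and $\Sigma_\Nb$, re-invoke Proposition~\ref{prop:BS} to rechoose the standardization so that $t_\Nb-t_\Mb<\min(\epsilon_\Mb,\epsilon_\Nb)$, and then fix $t_*\in(t_\Mb,t_\Nb)$ inside both windows. Build the interpolating spacetime $\Ib$ via Proposition~\ref{prop:Cauchy_chain} with $t_\Mb<t_1<t_1'<t_*<t_2'<t_2<t_\Nb$ and the associated chain $\Mb\xleftarrow{\alpha}\Pb\xrightarrow{\beta}\Ib\xleftarrow{\gamma}\Fb\xrightarrow{\delta}\Nb$. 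The extension argument from the proof of Theorem~\ref{thm:split} (using that $g_\Ib$-timelike vectors are $g_\Mb$-timelike on $(-\infty,t_2')\times\Sigma$ and $g_\Nb$-timelike on $(t_1',\infty)\times\Sigma$) shows that both preorders above transfer to $\prec_\Ib$, whence by transitivity $(S_\Nb,T_\Nb)_{t_\Nb}\prec_\Ib(S_\Mb,T_\Mb)_{t_\Mb}$. Two applications of Lemma~\ref{lem:split} (with Lemma~\ref{lem:Cauchypairs}) and one of Remark~\ref{rem:split} then transport the split property from $(S_\Nb,T_\Nb)$ in $\Nb$ to $(S_\Mb,T_\Mb)$ in $\Mb$ under the state $\nu^*\omega_\Nb$, with $\nu=\Af(\delta)\circ\Af(\gamma)^{-1}\circ\Af(\beta)\circ\Af(\alpha)^{-1}$.

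The chief obstacle, compared with Theorem~\ref{thm:split}, is simply the order in which the nested $S$- and $T$-chains are chosen: $T_\Nb$ is only known after invoking the hypothesis, so the $T$-chain must be built afterwards, around $T_\Nb$. The nonempty-exterior property of $T_\Nb$ supplies the room needed for the outer enlargements $T_*,T_\Mb$; beyond this re-sequencing, the argument runs in parallel to the proof of Theorem~\ref{thm:split}.
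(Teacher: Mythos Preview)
Your argument is essentially correct and follows the same strategy as the paper's proof: build a nested chain of orcs bridging $S_\Mb$ to a set on which the hypothesis at $\Nb$ can be invoked, then enlarge on the $T$-side, apply Lemma~\ref{lem:step}(a) to obtain the requisite $\prec$-orderings, and transport the split property through an interpolating spacetime via Lemma~\ref{lem:split} and Remark~\ref{rem:split}.

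The one point of difference is organisational. The paper places \emph{both} $\Sigma_\Mb$ and $\Sigma_\Nb$ at $t=0$ and inserts an additional pair $(\tilde{S},\tilde{T})$, giving the chain $S_\Mb\Subset S_*\Subset S_\Nb\Subset\tilde{S}\Subset\tilde{T}\Subset T_\Nb\Subset T_*\Subset T_\Mb$ and the ordering $(\tilde{S},\tilde{T})_0\prec_\Nb(S_\Nb,T_\Nb)_{t_\Nb}\prec_\Nb(S_*,T_*)_{t_*}\prec_\Mb(S_\Mb,T_\Mb)_0$. The extra pair lets $t_\Nb$ and $t_*$ be chosen freely small without any rechoosing of the standardisation. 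You instead place $\Sigma_\Mb$ and $\Sigma_\Nb$ at distinct times and compensate by a rechoosing step. Your justification that ``$\epsilon_\Mb,\epsilon_\Nb$ depend only on the sets and the intrinsic optical metrics of $\Sigma_\Mb$ and $\Sigma_\Nb$'' is not quite accurate: inspecting the proof of Lemma~\ref{lem:lightspeed}, the constant $K$ (and hence $\epsilon$) depends on the optical metrics $k_\tau$ over a \emph{time-neighbourhood} of the reference slice, not just on the slice itself. The step is nonetheless valid, because one may simply time-translate one of the two standard forms (an isomorphism in $\Loc$), under which both the optical metric at the reference slice and its behaviour in a neighbourhood are rigidly shifted, leaving the $\epsilon$'s unchanged while bringing $t_\Nb-t_\Mb$ below $\min(\epsilon_\Mb,\epsilon_\Nb)$. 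With that clarification the argument goes through; the paper's extra layer $(\tilde{S},\tilde{T})$ is what buys it freedom from this manoeuvre.
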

\begin{proof}
We may assume without loss of generality that both $\Mb$ and $\Nb$ are in standard form on manifold $\RR\times\Sigma$, so that $\{0\}\times\Sigma$ corresponds to $\Sigma_\Mb$ in $\Mb$ and $\Sigma_\Nb$ in $\Nb$.  
As usual, we abuse notation by regarding $S_\Mb$ as a subset of $\Sigma$.

We may choose other orcs $S_*$, $S_\Nb$ and $\tilde{S}$ so that $S_\Mb \Subset S_* \Subset S_\Nb\Subset \tilde{S}$, 
and so that $\tilde{S}$ has nonempty exterior. By hypothesis on $\omega_\Nb$ and $\Sigma_\Nb$, we may choose an orc $\tilde{T}$
with $\tilde{S}\Subset \tilde{T}$ so that
$\omega_\Nb$ is split for $(\tilde{S},\tilde{T})_0$ in $\Nb$, and also
further orcs $T_\Nb$, $T_*$ and $T_\Mb$ so that  
\begin{equation}
S_\Mb \Subset S_* \Subset S_\Nb\Subset \tilde{S} \Subset
\tilde{T} \Subset  T_\Nb \Subset T_* \Subset T_\Mb
\end{equation}
and $T_\Mb$ has nonempty exterior.  We now claim that there exist $0<t_*<t_\Nb$ such that 
\begin{equation}
(\tilde{S},\tilde{T})_0\prec_\Nb (S_\Nb,T_\Nb)_{t_\Nb}\prec_\Nb 
(S_*,T_*)_{t_*}
\prec_\Mb (S_\Mb,T_\Mb)_0.
\end{equation}
To see this, first apply Lemma~\ref{lem:step}(a) twice to deduce that the 
left- and right-hand orderings hold for any sufficiently small $t_\Nb, t_*$;
fixing such a $t_\Nb$, a further application of Lemma~\ref{lem:step}(a)
entails that $t_*$ may be chosen close enough to $t_\Nb$ in $(0,t_\Nb)$ so that the central ordering is also valid. 
This being so, we may infer that $\omega_\Nb$ is split for $(S_\Nb,T_\Nb)$ in $\Nb$. Constructing an interpolating metric
as in the proof of Theorem~\ref{thm:split}, we obtain an
isomorphism $\nu:\Af(\Mb)\to\Af(\Nb)$ such that $\nu^*\omega_\Nb$
is split for $(S_\Mb,T_\Mb)$ in $\Mb$. 
\end{proof}

Applied to a locally covariant QFT $\Xf=(\Af,\Sf)$ with states obeying local quasiequivalence, and supposing that $\omega_\Nb\in\Sf(\Nb)$, Theorem~\ref{thm:weak_distal} implies
that every state in $\Sf(\Mb)$ is split for $(S_\Mb,T_\Mb)$, of course. 

Next, we aim to show that distal splitting is enough to deduce that
the splitting distance vanishes for certain sets, and that the full 
split property can be inferred under some circumstances. Let $\Xf=(\Af,\Sf)$ be a locally covariant QFT with states obeying local quasi-equivalence. 
To establish our notation, $\Mb=(\RR\times\RR^{n-1},\eta,\ogth,\tgth)$ will denote Minkowski spacetime, with standard inertial
coordinates $(t,x^1,\ldots,x^{n-1})$, metric $\eta=dt\otimes dt-\sum_{i=1}^{n-1} dx^i\otimes dx^i$,
and orientation and time-orientation so that $dt\wedge dx^1\wedge\cdots\wedge
dx^{n-1}$ is positively oriented and $dt$ is future-pointing. 
If $S\subset\RR^{n-1}$ then $B(S,r)$ will denote the open ball of radius $r$ about $S$ in the Euclidean metric, and as before, $(S,T)_t$ will denote 
a regular Cauchy pair $(\{t\}\times S, \{t\}\times T)$.

\begin{definition} For any orc $S$, the 
\emph{splitting distance} $d(S)\in [0,\infty]$ is defined as the infimum over all $r>0$ such that there is a state in $\Sf(\Mb)$ with the split property for
$(S,  B(S,r))_\tau$ for some $\tau\in\RR$. We say that $\Sf(\Mb)$
has the \emph{distal split property} if $d(S)<\infty$ for every orc $S$. 
If there exists $d_0>0$ such that $d(S)\le d_0$ for every orc $S$ 
then $\Sf(\Mb)$ is said to obey the \emph{uniform distal split property};
if, further, $d_0=0$, then $\Sf(\Mb)$ is said to obey the \emph{split property}.  
\end{definition} 
Owing to local quasi-equivalence, if $r>d(S)$ then there exists $\tau\in\RR$ such that every state in $\Sf(\Mb)$ 
is split for $(S, B(S,r))_\tau$; as $\Af(\psi)^*\Sf(\Mb)=\Sf(\Mb)$
for every automorphism $\psi$ of $\Mb$, this statement is also true for every $\tau\in\RR$, and  remains true if $S$ is replaced by any of its translates or rotations.  Thus $d(S)$ depends
only on the equivalence class of $S$ under orientation-preserving Euclidean isometries.  The split property as defined above means that 
every state in $\Sf(\Mb)$ is split for every regular Cauchy pair lying in a constant-time hypersurface of Minkowski space and hence (by our results of
earlier sections) for every Cauchy pair in $\Mb$.  
We note some elementary observations:
\begin{lemma} \label{lem:easydistal}
For any orc $S$, we have
\begin{equation}\label{eq:easydistal1}
d(S) \le d(B(S,r))+r 
\end{equation}
for all $r\ge 0$. Moreover, if $d(B(R)) <\infty$ for every $R>0$, where $B(R)=B(\{0\},R)$ is
the open ball of radius $R$ about the origin, then $\Sf(\Mb)$ has the distal split property and 
\begin{equation}\label{eq:easydistal2}
d(S)\le \inf_{R>\diam(S)} (R+d(B(R)))
\end{equation}
holds for every orc $S$.
\end{lemma}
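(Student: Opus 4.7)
The plan is to exploit Remark~\ref{rem:split}, which propagates the split property up the preorder $\prec$: if a state is split for $(S_1,T_1)$ and $(S_1,T_1)\prec(S_2,T_2)$, then it is split for $(S_2,T_2)$ as well. In each case the task reduces to identifying the relevant geometric inclusions and recognising that they imply the required precedence on a constant-time slice.

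For \eqref{eq:easydistal1}, I may assume $r>0$, the case $r=0$ being trivial. Pick any $r'>d(B(S,r))$. By definition of the splitting distance, some state in $\Sf(\Mb)$ is split for the regular Cauchy pair $(B(S,r),B(B(S,r),r'))_\tau=(B(S,r),B(S,r+r'))_\tau$ at some (hence every) $\tau\in\RR$. The inclusion $S\subset B(S,r)\subset D_\Mb(B(S,r))$ on the slice $\{\tau\}\times\RR^{n-1}$ shows that
\begin{equation*}
(B(S,r),B(S,r+r'))_\tau \prec (S,B(S,r+r'))_\tau,
\end{equation*}
so Remark~\ref{rem:split} delivers splitting for the latter pair. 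Hence $d(S)\le r+r'$, and letting $r'\searrow d(B(S,r))$ yields \eqref{eq:easydistal1}.

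For \eqref{eq:easydistal2}, fix any orc $S$ and any $R>\diam(S)$. Choosing a point $x_0\in S$, every other point of $S$ lies within distance $\diam(S)<R$ of $x_0$, so $S\subset B(x_0,R)$; moreover $B(x_0,R+r)\subset B(S,R+r)$ for each $r>0$. By the translation invariance of $d(\cdot)$ noted in the preamble to the lemma, $d(B(x_0,R))=d(B(R))$, so for any $r>d(B(R))$ there is a state in $\Sf(\Mb)$ split for $(B(x_0,R),B(x_0,R+r))_\tau$. The two inclusions above combine to give
\begin{equation*}
(B(x_0,R),B(x_0,R+r))_\tau \prec (S,B(S,R+r))_\tau,
\end{equation*}
and Remark~\ref{rem:split} upgrades the splitting to the right-hand pair, so $d(S)\le R+r$. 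Sending $r\searrow d(B(R))$ and then taking the infimum over $R>\diam(S)$ proves \eqref{eq:easydistal2}. The distal split property then follows because, under the hypothesis $d(B(R))<\infty$ for all $R>0$, any single admissible $R$ already bounds $d(S)$ by a finite number.

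There is no substantive obstacle: the proof is a direct application of Remark~\ref{rem:split} together with two elementary spatial inclusions. The only routine verifications are that each introduced pair is genuinely a regular Cauchy pair (the sets are all open, relatively compact as bounded enlargements of orc sets, and retain nonempty exterior in $\RR^{n-1}$), and that the precedences under $\prec$ reduce on the constant-time slice to the containments already noted.
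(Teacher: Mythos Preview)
Your proof is correct and follows essentially the same approach as the paper: both parts use the same chains of inclusions (in your notation, $S\subset B(S,r)\subset B(S,r+r')$ and $S\subset B(x_0,R)\subset B(x_0,R+r)\subset B(S,R+r)$) together with translation invariance of $d(\cdot)$, and the paper's terse ``we easily see that'' is exactly your explicit appeal to Remark~\ref{rem:split} via the preorder $\prec$ on a constant-time slice.
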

\begin{proof}
In $\RR^{n-1}$ we have $B(B(S,r_1),r_2) = B(S,r_1+r_2)$. 
Considering the chain of inclusions $S\subset B(S,r)\subset B(B(S,r),\rho)= B(S,r+\rho)$ 
we easily see that $\rho>d(B(S,r))$ implies $\rho+r>d(S)$, and \eqref{eq:easydistal1} follows.

Next, suppose that all open balls have finite splitting distance. Let $S$ be any orc containing the origin and let $R>\diam(S)$, $\rho>d(B(R))$. Considering the inclusions 
$S\subset B(R)\subset B(R+\rho)\subset B(S,R+\rho)$, we see that $d(S)\le R+\rho<\infty$. As $d(S)$
is invariant under translations of $S$,  \eqref{eq:easydistal2} holds for all orcs $S$ and $\Sf(\Mb)$ obeys the distal split condition. 
\end{proof} 

A less trivial observation is the following. 
\begin{theorem} \label{thm:distal_diffeo}
Let $f\in\Diff(\RR^{n-1})$ be any diffeomorphism with uniformly bounded derivatives. For any orc $S$, $\epsilon>0$ and $r>d(B(f(S),\epsilon))$, 
one has
\begin{equation}\label{eq:diffeo_bd1}
d(S) \le \inf\{\rho>0: f^{-1}(B(f(S),r+2\epsilon))\subset B(S,\rho)\}.
\end{equation}
In particular, this gives an estimate
\begin{equation}\label{eq:diffeo_bd}
d(S)\le \kappa \,d^+(f(S)) 
\end{equation} 
where $d^+(T):=\liminf_{\epsilon\to 0+} d(B(T,\epsilon))$ is the \emph{upper splitting distance}
and $\kappa$ is the supremum of $\|D(f^{-1})\|$ over $B(f(S),r)\setminus f(S)$.
\end{theorem}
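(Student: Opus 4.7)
The plan is to use $f$ to construct an isomorphism between $\Mb$ and a deformed ultrastatic Minkowski spacetime $\Mb_f$, transfer the distal split hypothesis at $B(f(S),\epsilon)$ in $\Mb$ to a split property for $(S,B(S,\rho))$ in $\Mb_f$, and then invoke the deformation argument of Theorem~\ref{thm:split} to return this to $\Mb$.

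Concretely, set $\Phi(t,x):=(t,f(x))$ and define $\Mb_f:=(\RR\times\RR^{n-1},\Phi^*\eta,\Phi^*\ogth,\Phi^*\tgth)$, an ultrastatic spacetime with metric $dt\otimes dt-g_f$ for $g_f:=f^*(\sum_i dx^i\otimes dx^i)$. Then $\Phi\in\Mor_\Loc(\Mb_f,\Mb)$ is an isomorphism, so $\Af(\Phi)^*\Sf(\Mb)=\Sf(\Mb_f)$, and the two spacetimes have oriented-diffeomorphic Cauchy surfaces $\RR^{n-1}$ (via an orientation-reversing diffeomorphism should $\det Df<0$); Proposition~\ref{prop:Cauchy_chain} thus applies. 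Fix any $\rho>\rho^\ast:=\inf\{\rho'>0:f^{-1}(B(f(S),r+2\epsilon))\subset B(S,\rho')\}$; I show $d(S)\le\rho$, whence letting $\rho\downarrow\rho^\ast$ gives \eqref{eq:diffeo_bd1}. By $r>d(B(f(S),\epsilon))$ and local quasi-equivalence, there exist $\tau_0\in\RR$ and $\omega_0\in\Sf(\Mb)$ split for $(B(f(S),\epsilon),B(f(S),r+\epsilon))_{\tau_0}$; Lemma~\ref{lem:split} makes $\omega_0':=\Af(\Phi)^*\omega_0\in\Sf(\Mb_f)$ split for $(f^{-1}(B(f(S),\epsilon)),f^{-1}(B(f(S),r+\epsilon)))_{\tau_0}$ in $\Mb_f$, and the trivial inclusion $S\subset f^{-1}(B(f(S),\epsilon))$ together with $f^{-1}(B(f(S),r+\epsilon))\subset f^{-1}(B(f(S),r+2\epsilon))\subset B(S,\rho)$ (the latter being the hypothesis) yield, via Remark~\ref{rem:split}, that $\omega_0'$ is split for $(S,B(S,\rho))_{\tau_0}$ in $\Mb_f$.

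I would then run the deformation argument of Theorem~\ref{thm:split} with source $\Mb_f$, target $\Mb$, and the same regular Cauchy pair $(S,B(S,\rho))$ in both. Construct an interpolating spacetime $\Ib$ via Proposition~\ref{prop:Cauchy_chain} with strip $(t_1,t_2)$ between the target time $t_\Mb$ and $\tau_0$, and insert an intermediate Cauchy pair $(S_*,T_*)_{t_*}$ at some $t_*\in(t_1,t_2)$ obeying $(S,B(S,\rho))_{\tau_0}\prec_\Ib(S_*,T_*)_{t_*}\prec_\Ib(S,B(S,\rho))_{t_\Mb}$. Using Lemma~\ref{lem:lightspeed} applied to the instantaneous optical metrics (uniformly comparable to Euclidean on the relevant compacts, since $\|Df\|$ is uniformly bounded) and pulling back via $\Phi$ on the $\Mb_f$-side, the two orderings reduce to the sandwich conditions
\begin{equation*}
B(S,t_*-t_\Mb)\subset S_*\subset f^{-1}(B(f(S),\epsilon-(\tau_0-t_*))),
\end{equation*}
\begin{equation*}
f^{-1}(B(f(S),r+\epsilon+(\tau_0-t_*)))\subset T_*\subset B(S,\rho-(t_*-t_\Mb)).
\end{equation*}
Writing $L:=\sup\|Df\|$, the first is solvable provided $L(t_*-t_\Mb)+(\tau_0-t_*)<\epsilon$, and the second provided $\tau_0-t_*<\epsilon$ together with $t_*-t_\Mb<\rho-\rho'$ for some $\rho'\in(\rho^\ast,\rho)$; both conditions can be achieved by taking the time differences sufficiently small. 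The extra $\epsilon$ in the hypothesis provides precisely the room for the $\tau_0-t_*$ deficit on the deformed side, while the strict slack $\rho>\rho^\ast$ absorbs the $t_*-t_\Mb$ deficit on the Minkowski side---this balancing of the two causal-cone contractions is the key technical obstacle. Applying Lemma~\ref{lem:split} along the chain of Cauchy morphisms then produces $\nu^*\omega_0'\in\Sf(\Mb)$ split for $(S,B(S,\rho))_{t_\Mb}$ in $\Mb$, so $d(S)\le\rho$ and \eqref{eq:diffeo_bd1} follows.

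The corollary \eqref{eq:diffeo_bd} is deduced from \eqref{eq:diffeo_bd1} by a mean value estimate: for $y\in B(f(S),r+2\epsilon)$ with $y_0\in f(S)$ and $|y-y_0|$ arbitrarily close to $d(y,f(S))$, integrating $\|D(f^{-1})\|\le\kappa_\epsilon$ along the line segment from $y_0$ to $y$ (which remains in $B(f(S),r+2\epsilon)$) gives $|f^{-1}(y)-f^{-1}(y_0)|\le\kappa_\epsilon(r+2\epsilon)$ with $\kappa_\epsilon:=\sup_{B(f(S),r+2\epsilon)\setminus f(S)}\|D(f^{-1})\|$; hence $d(S)\le\kappa_\epsilon(r+2\epsilon)$. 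Sending $\epsilon\to 0^+$ along a sequence realising $d^+(f(S))$ (with $r$ chosen just above $d(B(f(S),\epsilon))$) yields $d(S)\le\kappa\cdot d^+(f(S))$, where by continuity $\kappa=\lim_{\epsilon\to 0^+}\kappa_\epsilon$ agrees with the supremum over $B(f(S),r)\setminus f(S)$ appearing in the theorem statement.
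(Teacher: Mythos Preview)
Your overall strategy is sound and leads to the same conclusion, but it differs from the paper's argument and contains one presentational slip.

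\textbf{A minor inconsistency.} You coarsen $\omega_0'$ from the pair $(f^{-1}(B(f(S),\epsilon)),f^{-1}(B(f(S),r+\epsilon)))_{\tau_0}$ to $(S,B(S,\rho))_{\tau_0}$ \emph{before} running the deformation, but your sandwich conditions are written for the uncoarsened pair (they retain the $\epsilon$-buffer). If you genuinely used the coarsened pair, the first ordering would force $S_*\subsetneq S$ in the $g_f$-metric while the Minkowski side forces $S\subset S_*$, which is impossible. The fix is simply not to coarsen first: keep the tight pair through the deformation, and the final output in $\Mb$ is then split for $(S,f^{-1}(B(f(S),r+2\epsilon)))$, hence for $(S,B(S,\rho))$. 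With that correction your argument goes through.

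\textbf{Comparison with the paper.} The paper does not introduce the auxiliary spacetime $\Mb_f$ or invoke the generic interpolation of Proposition~\ref{prop:Cauchy_chain}. Instead it writes down an explicit interpolating metric
\[
g = c^{2\varphi}\bigl(dt\otimes dt - c^{-2}\varphi\, h - (1-\varphi)\,\delta\bigr),\qquad f^*h=\delta,
\]
chosen so that $g\le c^{2\varphi}\eta$ as quadratic forms. This single inequality forces every $g$-causal vector to be $\eta$-causal, so the Minkowski preorder $\prec_\Mb$ refines $\prec_\Ib$ \emph{globally}, not merely on half-slabs. The diffeomorphism $f$ is then absorbed into the Cauchy morphism $\beta(t,\xb)=(t/c,f(\xb)):\Pb\to\Ib$, giving a chain $\Mb\leftarrow\Pb\to\Ib\leftarrow\Fb\to\Mb$ with \emph{both} endpoints equal to Minkowski. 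The required ordering $(B(f(S),\epsilon),B(f(S),\epsilon+r))_{2t_*}\prec_\Ib(f(S),B(f(S),2\epsilon+r))_{-t_*}$ then follows from elementary Minkowski light-cone geometry with $3t_*=\epsilon$, with no sandwich estimates or Lipschitz bounds on $Df$ needed at this stage. Your route, by contrast, uses the black-box interpolation and must balance light-cone deficits on two different optical metrics, which is where your Lipschitz constant $L=\sup\|Df\|$ enters. Both arrive at~\eqref{eq:diffeo_bd1}; the paper's construction is shorter because the global cone condition replaces your careful bookkeeping of the two causal regimes. Your derivation of~\eqref{eq:diffeo_bd} from~\eqref{eq:diffeo_bd1} via the mean-value estimate matches the paper's.
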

Before giving the proof, we illustrate this theorem with two examples. 
First, suppose   $f(\xb)= \xb/\lambda$, for $\lambda>0$, in which case  
$\kappa=\lambda$ and we find
\begin{equation}\label{eq:lin_scaling}
d(S)\le \lambda d^+(\lambda^{-1}S)\quad\text{and hence}\quad
d(\lambda S)\le \lambda d^+(S)
\end{equation}
for every orc $S$. Thus splitting distances scale at most linearly. Consequently, we have:
\begin{corollary}\label{cor:distal_scale}
If $d(B(R))<\infty$ for some $R>0$ then $\Sf(\Mb)$ has the distal split property.
If $\Sf(\Mb)$ has the uniform distal splitting property then $\Sf(\Mb)$ has the split property. 
\end{corollary}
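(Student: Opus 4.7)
The plan is to handle the two assertions separately, both leveraging Theorem~\ref{thm:distal_diffeo} but exploiting different consequences of it. The second (uniform distal $\Rightarrow$ split) is a direct application of the linear scaling bound \eqref{eq:lin_scaling}, while the first (splitting at one ball radius $\Rightarrow$ distal split) requires a more delicate use of the full estimate \eqref{eq:diffeo_bd1}.

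For the first part, the natural route is via \eqref{eq:easydistal2}: it suffices to show $d(B(R))<\infty$ for every $R>0$. The obstacle is that \eqref{eq:lin_scaling} involves the upper splitting distance $d^+$ rather than $d$, so it cannot be applied naively to a single hypothesis $d(B(R_0))<\infty$. The workaround is to apply Theorem~\ref{thm:distal_diffeo} directly, with $f(\xb)=\xb/\lambda$ (whose derivative is the constant $\lambda^{-1}I$, hence uniformly bounded) and orc $S=B(\lambda(R_0-\epsilon))$ for $\lambda>0$ and $\epsilon\in(0,R_0)$. With this choice, $f(S)=B(R_0-\epsilon)$ and $B(f(S),\epsilon)$ is a translate of $B(R_0)$, so the hypothesis $d(B(R_0))<\infty$ provides admissible values of $r>d(B(f(S),\epsilon))$. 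A direct computation of the containment $f^{-1}(B(f(S),r+2\epsilon))\subset B(S,\rho)$ yields
\begin{equation}
d(B(\lambda(R_0-\epsilon))) \le \lambda(r+2\epsilon).
\end{equation}
Setting $\lambda=R'/(R_0-\epsilon)$ for arbitrary $R'>0$, then taking $r\to d(B(R_0))^+$ and $\epsilon\to 0^+$, one obtains the linear bound $d(B(R'))\le (R'/R_0)\,d(B(R_0))$ for every $R'>0$. This establishes finiteness of $d(B(R))$ for every $R>0$, whence Lemma~\ref{lem:easydistal} gives the distal split property.

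For the second part, the argument is essentially a single line. Suppose $d(T)\le d_0$ for every orc $T$; then trivially $d^+(T)=\liminf_{\epsilon\to 0^+}d(B(T,\epsilon))\le d_0$ for every orc $T$ as well. Applied to an arbitrary orc $T$ with $S=T/\lambda$, the scaling inequality \eqref{eq:lin_scaling} reads $d(T)\le \lambda\, d^+(T/\lambda)\le \lambda d_0$, valid for every $\lambda>0$. Letting $\lambda\to 0^+$ forces $d(T)=0$ for every orc $T$, which is the split property.

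The main obstacle is thus concentrated in the first assertion, specifically in finding a way to propagate finiteness of the splitting distance from one ball to all balls without already knowing that $d$ is continuous in the radius. The key observation unlocking the proof is that Theorem~\ref{thm:distal_diffeo} is formulated in terms of $d$ rather than $d^+$, so one is free to choose the diffeomorphism $f$ and the orc $S$ so that $B(f(S),\epsilon)$ becomes \emph{precisely} the known ball $B(R_0)$, thereby avoiding the circularity inherent in a direct use of \eqref{eq:lin_scaling}.
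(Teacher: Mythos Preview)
Your proof is correct, and for the second assertion it is essentially identical to the paper's argument (both invoke \eqref{eq:lin_scaling} together with the trivial bound $d^+(\cdot)\le d_0$ under uniform distal splitting, then let the scaling parameter tend to zero).

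For the first assertion, however, you take a genuinely different route from the paper. You worry that \eqref{eq:lin_scaling} involves $d^+$ rather than $d$ and therefore go back to Theorem~\ref{thm:distal_diffeo} with a bespoke choice of $f$ and $S$ so that $B(f(S),\epsilon)$ lands exactly on the known ball $B(R_0)$; this is a clean and self-contained workaround. The paper instead resolves the $d^+$-versus-$d$ issue more cheaply by invoking the already-established inequality \eqref{eq:easydistal1}: applied with $S=B(r+\epsilon)$ and radius increment $R-r-\epsilon$ (for $r<R$ and small $\epsilon>0$), it gives $d(B(r+\epsilon))\le d(B(R))+R-r-\epsilon<\infty$, whence $d^+(B(r))<\infty$ for every $r<R$. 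One can then apply \eqref{eq:lin_scaling} directly to conclude $d(B(r'))<\infty$ for all $r'>0$, and finish via Lemma~\ref{lem:easydistal} as you do. Your approach has the virtue of yielding the quantitative bound $d(B(R'))\le (R'/R_0)\,d(B(R_0))$ in one stroke, while the paper's is shorter and more modular, reusing \eqref{eq:easydistal1} rather than revisiting the diffeomorphism construction. The ``key observation'' you highlight at the end is therefore not strictly necessary: the circularity you are avoiding dissolves once one notices that \eqref{eq:easydistal1} already controls $d^+$ of smaller balls in terms of $d$ of the given one.
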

\begin{proof}
If $d(B(R))$ is finite for some $R$, then by \eqref{eq:easydistal1} one sees that
$d^+(B(r))<\infty$ for any $r<R$ and hence by \eqref{eq:lin_scaling}, $d(B(r'))<\infty$
for all $r'>0$. The distal split property follows by the second part of Lemma~\ref{lem:easydistal}.

If the uniform distal split property holds then we have $d(S)\le \lambda d_0$ for all $\lambda>0$ and
any open relatively compact $S$. Thus $d(S)=0$ for all such $S$. (Clearly it would have been
enough for this conclusion that $d(\lambda S)=o(\lambda)$ as
$\lambda\to\infty$.)
\end{proof}  

For our second example, we suppose $d(B(r_*))$ is finite and nonzero for some $r_*>0$. Let $\epsilon>0$ and set $\rho_1=\frac{1}{2}d(B(r_*+\epsilon))$ (which is finite by Corollary~\ref{cor:distal_scale}). Choose $r>2\rho_1$ and $\rho_2>\frac{1}{2}r+\epsilon$. Next, choose a real-valued $\chi\in C_0^\infty(\RR^+)$ that obeys
$\chi\equiv 0$ on $[0,r_*]$,  $\inf_{\RR^+}\chi' >-1$, and
  $\chi(\rho)  = \rho-r_*$ for $\rho\in[r_*+\rho_1,r_*+\rho_2]$
(such $\chi$ certainly exist).  Then we obtain a diffeomorphism 
$f\in\Diff(\RR^{n-1})$ by 
\begin{equation}
f(\xb) = (\|\xb\|+\chi(\|\xb\|))\frac{\xb}{\|\xb\|}
\end{equation} 
which acts trivially outside a compact set and obeys
$f(B(\rho))=B(\rho+\chi(\rho))$; in particular, 
$f(B(r_*))=B(r_*)$. Applying Theorem~\ref{thm:distal_diffeo}, 
equation~\eqref{eq:diffeo_bd1} gives 
\begin{equation}
d(B(r_*))\le \inf\{\rho>0: r_*+r+2\epsilon \le r_*+\rho + \chi(r_*+\rho)\}
\end{equation}
Noting that $\rho + \chi(r_*+\rho)=2\rho$ for any $\rho\in [ \rho_1, \rho_2]$, and that this interval contains $\frac{1}{2}r+\epsilon$ in its interior, we therefore have $d(B(r_*))\le 
\frac{1}{2}r+\epsilon$, and hence
\begin{equation}\label{eq:distal_bd}
d(B(r_*)) \le \frac{1}{2} d (B(r_*+\epsilon)) + \epsilon,
\end{equation}
because $r$ was arbitrary apart from the constraint
$r>2\rho_1$. The inequality~\eqref{eq:distal_bd} holds for 
all $\epsilon>0$ and any $r_*>0$ (our argument assumed
that $d(B(r_*))>0$, but the statement holds trivially if $d(B(r_*))=0$). 
Next, take any $r>0$ and iterate \eqref{eq:distal_bd} over two subintervals
of length $\epsilon/2$, with $r_*=r$ and $r_*=r+\epsilon/2$, thus
obtaining $d(B(r)) \le \frac{1}{4} (d (B(r+\epsilon)) + 3\epsilon)$, 
also valid for all $r>0$, $\epsilon>0$. Repeating the bisection process $k$ times
in total, one finds
\begin{equation}
d(B(r)) \le \frac{d (B(r+\epsilon))}{2^{2^k}}  +\frac{(1-2^{-2^k})\epsilon}{2^{k-1}},
\end{equation}
and taking $k\to\infty$, we deduce that $d(B(r))=0$ for all $r$. 
The upshot of this argument is: 
\begin{corollary} If $d(B(r_*))<\infty$ for some $r_*>0$ then $d(B(r))=0$ for every $r>0$, 
and $d(S)=0$ for every open relatively compact $S$ that is diffeomorphic to an open ball under $f\in\Diff(\RR^{n-1})$ with bounded derivatives. 
\end{corollary}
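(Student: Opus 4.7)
The plan is that both conclusions follow from work already done: the first is the bisection argument laid out just before the corollary, and the second is an application of Theorem~\ref{thm:distal_diffeo} once the first is in hand.

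For the first statement I would appeal to Corollary~\ref{cor:distal_scale}, which converts the hypothesis $d(B(r_*))<\infty$ into $d(B(R))<\infty$ for all $R>0$, thereby guaranteeing that the bisection estimate
\[
d(B(r))\le \frac{d(B(r+\epsilon))}{2^{2^k}}+\frac{(1-2^{-2^k})\epsilon}{2^{k-1}}
\]
derived immediately above the corollary has a finite right-hand side. Letting $k\to\infty$ for any fixed $r,\epsilon>0$ then gives $d(B(r))=0$, and since $r$ is arbitrary the first conclusion follows.

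For the second statement, given an orc $S$ and $f\in\Diff(\RR^{n-1})$ with bounded derivatives such that $f(S)$ is an open ball, I would observe that the Euclidean $\epsilon$-neighbourhood of an open ball is again an open ball, so the first conclusion forces $d(B(f(S),\epsilon))=0$ for every $\epsilon>0$, whence $d^+(f(S))=\liminf_{\epsilon\to 0+}d(B(f(S),\epsilon))=0$. Plugging this into the bound~\eqref{eq:diffeo_bd} of Theorem~\ref{thm:distal_diffeo} yields $d(S)\le \kappa\cdot 0=0$, provided $\kappa<\infty$. The only point requiring a moment's thought is this finiteness: since $d(B(f(S),\epsilon))=0$, any positive $r$ is admissible in Theorem~\ref{thm:distal_diffeo}, and picking a finite one makes $\overline{B(f(S),r)}$ compact, on which $\|D(f^{-1})\|$ is bounded by smoothness of $f^{-1}$. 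This is effectively the only non-mechanical step, and the main ``obstacle'' of the corollary is simply the bookkeeping that packages the preceding analysis and Theorem~\ref{thm:distal_diffeo} into the stated conclusion.
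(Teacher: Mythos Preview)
Your proposal is correct and follows essentially the same approach as the paper: the first assertion is precisely the bisection argument given immediately before the corollary (with Corollary~\ref{cor:distal_scale} ensuring finiteness throughout), and the second is the direct application of~\eqref{eq:diffeo_bd} once $d^+(f(S))=0$. Your explicit check that $\kappa<\infty$ (by compactness of $\overline{B(f(S),r)}$ and smoothness of $f^{-1}$) is a small clarification the paper leaves implicit.
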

\begin{proof}
We have already proved that $d(B(r))=0$ for all $r>0$.  
The remaining statement follows by Theorem~\ref{thm:distal_diffeo}: we may assume
that $f(S)=B(R)$ for some $R>0$ and hence $d(B(f(S),\epsilon))=d(B(R+\epsilon))=0$ for all $\epsilon>0$, so $d^+(f(S))=0$
and $d(S)=0$ by~\eqref{eq:diffeo_bd}. 
\end{proof}

This result stops slightly short of proving that the full split property holds if a ball of some radius has a finite splitting distance. The arguments used here cannot exclude the possibility that, for example, a hollow ball with inner and outer radii $a$ and $b$ might have a splitting distance $a$ (although
this would be excluded if one assumes uniform distal splitting). Of course, the interpretation of
these results is that models with nonzero splitting distances for balls, such as those of~\cite[Thm 4.3]{DAnDopFreLon:1987}, cannot be compatible with the axioms of local covariance together
with the timeslice property and local quasi-equivalence. At least heuristically one may understand the reason as follows: a nonzero splitting distance
is related to the existence of a maximum admissible temperature, which is understood technically as the statement that KMS states of any higher temperature fail to be locally quasi-equivalent  to the vacuum state~\cite{BucJun:1986}. The argument we will shortly present to prove Theorem~\ref{thm:distal_diffeo} is based on a spacetime metric
in which a period of inflation occurs between two constant time hypersurfaces, while the metric takes the Minkowski form to the past and future of this region. As inflation tends to cool temperatures,  
one might expect that some KMS states with subcritical temperature in the future of the inflation must have arisen from states with (at least locally) supercritical temperatures to the past of the inflationary period. Therefore
the evolution induced by the timeslice axiom cannot preserve the
local quasi-equivalence class.
 
\begin{proof}[Proof of Theorem~\ref{thm:distal_diffeo}]
Let $t_*=\epsilon/3$  
and let $P=(-\infty,0)\times\RR^{n-1}$, $F=(t_*,\infty)\times \RR^{n-1}$, and define
$\Pb=\Mb|_P$ and $\Fb=\Mb|_F$, with inclusion morphisms $\iota_{\Mb;P}:\Pb\to\Mb$ and $\iota_{\Mb;F}:\Fb\to\Mb$. 

As $f\in\Diff(\RR^{n-1})$ has uniformly bounded derivatives,  the push-forward by $f$ has a bounded Euclidean norm
on tangent vectors, i.e., there is a constant
$c>0$ such that $c\| f_* u\|\le \|u\|$ for every $u\in T_p\RR^{n-1}$ and $p\in\RR^{n-1}$, where $\|\cdot\|$ is the Euclidean norm on tangent spaces of $\RR^{n-1}$.  Define a metric $h$ on
$\RR^{n-1}$ so that $f^*h=\delta$, the Euclidean metric, and set
\begin{equation}
g = c^{2\varphi} \left( dt\otimes dt -  c^{-2} \varphi  h -(1-\varphi) \delta\right)
\end{equation}
where  
$\varphi\in C^\infty(\RR\times\RR^{n-1})$ takes values in $[0,1]$, with $\varphi\equiv 0$ on $F$ and $\varphi\equiv 1$ on $P$. Note that $g$ is a smooth metric, with the following properties:
\begin{tightitemize}
\item   $g|_F= \eta$, while
$g|_P=c^2 dt\otimes dt -    h$;
\item  as quadratic forms $g\le c^{2\varphi}\eta$, because $h(f_*u,f_*u)=\delta(u,u)\ge c^2\delta(f_*u,f_*u)$ for all $u$;
\item every $g$-causal curve is therefore $\eta$-causal and the Cauchy development of any
set with respect to $\eta$ is thus contained in the Cauchy development with respect to $g$;
\item every surface $\{t\}\times\RR^3$, being a Cauchy surface for $\eta$, is
a Cauchy surface for $g$, which is accordingly globally hyperbolic~\cite[Cor.~14.39]{ONeill}.
\end{tightitemize}
We now define a spacetime $\Ib=(\RR\times\RR^{n-1}, g, \ogth,\tgth)$
which is an object in $\Loc$. The map $\beta(t,\xb)= (t/c, f(\xb))$ defines a morphism $\beta:\Pb\to\Ib$, and we have a Cauchy chain
\begin{equation} \label{eq:distalCauchychain}
\Mb\xleftarrow{\iota_{\Mb;P}} \Pb \xrightarrow{\beta} \Ib \xleftarrow{\iota_{\Ib;F}} \Fb \xrightarrow{\iota_{\Mb;F}} \Mb,
\end{equation} 
where the morphisms other than $\beta$ are all induced by inclusions. 
An important consequence of our comments about Cauchy developments
with respect to $g$ and $\eta$ is that the partial ordering $\prec_\Ib$
of regular Cauchy pairs in $\Ib$ is coarser than the Minkowski ordering $\prec_\Mb$: $(S_1,T_1)\prec_\Mb (S_2,T_2)$ implies $(S_1,T_1)\prec_\Ib (S_2,T_2)$.

We now turn to our region $S$ of interest.
Letting  $\tau=-ct_*$, we have 
$\beta(\{\tau\}\times S) = \{-t_*\}\times f(S)$, and standard Minkowski geometry gives
\begin{equation}
( B(f(S),\epsilon),  B(f(S),\epsilon+r))_{2t_*}\prec_\Ib 
( f(S),  B(f(S),2\epsilon+r))_{-t_*}  
\end{equation}
for any $r>0$; for this ordering certainly holds with respect to $\prec_\Mb$,
in which we have unit speed of light and a time separation of $3t_*=\epsilon$ between the hypersurfaces containing these regular Cauchy pairs.
Applying Lemma~\ref{lem:split} and Remark~\ref{rem:split} as in the proof of Theorem~\ref{thm:split} and using local quasi-equivalence of $\Sf(\Mb)$, we may conclude that if $r>d(B(f(S),\epsilon))$, 
then every state in $\Sf(\Mb)$ is split for the regular Cauchy pair 
$( S,  f^{-1}(B(f(S),2\epsilon+r)))_{-\tau}$.  
Accordingly, we have $d(S)<\rho$ for all $\rho>0$ such that 
$B(S,\rho)$ contains $f^{-1}(B(f(S),2\epsilon+r))$, 
thus establishing \eqref{eq:diffeo_bd1}. 

To complete the proof we must estimate $\rho$. Take any $\kappa'>\kappa$, and note that $\|D(f^{-1})\|\le\kappa'$ on 
$f^{-1}(B(f(S),2\epsilon+r))\setminus f(S)$ for all sufficiently small $\epsilon>0$. Then $f^{-1}(B(f(S),2\epsilon+r))\subset B(S,\kappa'(2\epsilon+r))$, so \eqref{eq:diffeo_bd1} gives
$d(S)\le \kappa' (2\epsilon+d(B(f(S),\epsilon)))$ as $r$ may be
chosen arbitrarily close to $d(B(f(S),\epsilon))$. Taking the limit inferior
as $\epsilon\to 0+$ gives $d(S)\le \kappa' d^+(f(S))$ and hence
\eqref{eq:diffeo_bd} on taking $\kappa'\to\kappa+$. 
\end{proof}

\section{Ultrastatic spacetimes}\label{sect:ultrastatic}

In this section we comment briefly on sufficient conditions for 
a locally covariant QFT to admit a state obeying both the
split and (full) Reeh--Schlieder properties on the class of
connected ultrastatic globally hyperbolic spacetimes, i.e., those spacetimes $\Nb\in\Loc$ 
in standard form $\Nb=(\RR\times\Sigma, dt\otimes dt - h,\ogth,\tgth)$
where $h$ is a fixed complete\footnote{See, e.g.,~\cite[Prop. 5.2]{Kay1978} for the relation of completeness to global hyperbolicity.} Riemannian metric on $\Sigma$, which is
assumed connected.
As every connected spacetime $\Mb\in\Loc$ has Cauchy surfaces
oriented-diffeomorphic to those of such an ultrastatic spacetime
(by virtue of~\cite{NomizuOzeki1961} any Cauchy surface of $\Mb$ can be equipped with a complete Riemannian metric from which an ultrastatic spacetime may be constructed), such conditions
would enable the results of Section~\ref{sect:split} to 
apply nontrivially to any connected $\Mb\in\Loc$.

Let $\Nb$ be connected and ultrastatic, as defined above. Then 
$\Nb$ admits a one-parameter group of time translations $T_\tau:(t,\sigma)\mapsto (t+\tau,\sigma)$ 
and hence automorphisms $\Af(T_\tau)$ of $\Af(\Mb)$. 
Our first assumption is that $\Af(\Nb)$ admits a faithful ground state $\omega_{\Nb}$ for the time translations
$\Af(T_\tau)$. That is, (a) $\omega_{\Nb}$ is a time-translationally invariant state, 
$\Af(T_\tau)^*\omega_{\Nb}=\omega_{\Nb}$ for all $\tau\in\RR$, and 
(b) the unitary implementation $U(\tau)$ of $\Af(T_\tau)$ in the GNS
representation $(\HH_{\omega_\Nb},\pi_{\omega_\Nb},\Omega_{\omega_\Nb})$ induced by $\omega_{\Nb}$, which
obeys $U(\tau)\pi_{\omega_\Nb}(A) U(\tau)^{-1}=\pi_{\omega_\Nb}(\Af(T_\tau)A)$ and $U(\tau)\Omega_{\omega_\Nb}=\Omega_{\omega_\Nb}$, 
has a positive generator, i.e., $U(\tau)=e^{iH\tau}$ with positive self-adjoint operator $H$. In the case of a theory with states $(\Af,\Sf)$,
one would also assume that $\omega_\Nb\in\Sf(\Nb)$. 
(If $\zeta\in\Aut(\Af)$ is a global gauge transformation,
we have $\zeta_\Nb\circ\Af(T_\tau) = \Af(T_\tau)\circ\zeta_\Nb$
by naturality, and as $\zeta_\Nb$ is an isomorphism, 
$\zeta_\Nb^*\omega_\Nb\in\Sf(\Nb)$ is also a ground state.
Hence, if there is a unique ground state in $\Sf(\Nb)$, it 
is automatically gauge invariant.)

The second assumption is needed for the Reeh--Schlieder property.
Defining the local von Neumann algebras $\Rgth(O):=\pi_{\omega_\Nb}(\Af^\kin(\Nb;O))''$ for nonempty $O\in\OO(\Nb)$, we
assume the \emph{weak timelike tube criterion}
\begin{equation}
\left(\bigcup_{\tau\in\RR} \Rgth(T_\tau O)\right)'' = \Rgth(\Nb)
\end{equation} 
holds for any nonempty $O\in\OO(\Nb)$
(the right-hand side is of course $\BB(\HH_{\omega_{\Nb}})$ in
the case that $\omega_\Nb$ is pure).\footnote{E.g., this condition is fulfilled if the $\Af^\kin(\Nb;T_\tau O)$ ($\tau\in\RR$)
generate a dense subspace of $\Af(\Nb)$.}
This condition was established by Borchers in general Wightman theories in Minkowski space~\cite{Borchers:1961} and (in suitable representations) for linear fields in stationary spacetimes
by Strohmaier~\cite{Stroh:2000}.\footnote{An alternative
proof of the Reeh--Schlieder theorem on ultrastatic spacetimes, based on antilocality of fractional powers of the Laplace operator, is given
in~\cite{Verch:1993}.} Given this condition, it then holds immediately
that $\Omega$ is cyclic for every $\pi_{\omega_\Nb}(\Af^\kin(\Nb;O))$ with nonempty $O\in\OO(\Nb)$ 
and so satisfies the hypotheses of Corollary~\ref{cor:RS}. 
See, e.g., Borchers' version~\cite[Thm~1]{Borchers_vacstate:1965}
of the Reeh--Schlieder theorem~\cite{ReehSchlieder:1961}.   
It seems reasonable that the timelike tube criterion holds on \emph{connected} ultrastatic spacetimes for general theories of interest.

For the split property, we assume additionally that $\Omega_{\omega_\Nb}$ obeys a suitable \emph{nuclearity criterion}. Let $O\in\OO(\Nb)$ be nonempty and denote $\Rgth(O):=\pi_{\omega_\Nb}(\Af^\kin(\Nb;O))''$.   We say that $\omega_\Nb$ obeys the nuclearity criterion  for $O$ if the maps $\Xi_\beta:\Rgth(O)\to \HH_{\omega_\Nb}$ given for $\beta>0$ by $\Xi_\beta(A)=e^{-\beta H}A\Omega_{\omega_\Nb}$, are \emph{nuclear}. That is, for each $\beta$ there is a countable decomposition
$\Xi_\beta(\cdot) = \sum_i \psi_i\varphi_i(\cdot)$ for vectors $\psi_i\in\HH_{\omega_\Nb}$ and bounded
linear functionals $\varphi_i$ on $\Rgth(O)$ such that $\sum_i \|\psi_i\|\,\|\varphi_i\|$ is finite,
whereupon we write $\|\Xi_\beta\|_1$ for the infimum of this sum over
all possible decompositions -- a quantity called the \emph{nuclearity index}. Using~\cite[Prop.~17.1.4]{BaumWollen:1992} (which
is abstracted from~\cite{BucDAnFre:1987}), one easily sees that if
$(S,T)$ is a regular Cauchy pair in $\Nb$ and 
$\omega_\Nb$ obeys nuclearity for $D_\Nb(T)$ with the corresponding nuclearity index
obeying $\|\Xi_\beta\|_1\le e^{(\beta_0/\beta)^n}$ for some fixed $n>0$, $\beta_0>0$ 
and all $\beta\in(0,1)$, then $\omega_\Nb$ has the split property for
$(S,T)$ with $\Omega_{\omega_\Nb}$ as a cyclic and separating vector.  
In the Minkowski space theory, nuclearity conditions of this type are closely related to
good thermodynamic properties such as the existence of KMS states~\cite{BucJun:1986,BucJun:1989}, so again
there is good reason to believe that they should hold for theories of interest. In
ultrastatic spacetimes, 
nuclearity was established for the Klein--Gordon field in~\cite{Verch_nucspldua:1993}
and for Dirac fields in~\cite{DAnHol:2006}.

In summary, there is good reason to believe that physically well-behaved 
locally covariant theories should admit states satisfying the Reeh--Schlieder 
and split properties in connected ultrastatic spacetimes, and hence that
the results of Section~\ref{sect:split} apply nontrivially to yield states with the split and partial Reeh--Schlieder properties in general connected
globally hyperbolic spacetimes. 

The question of whether Reeh--Schlieder and split states can be expected in general disconnected ultrastatic spacetimes would seem to require more detailed information concerning $\Af$. Our deformation arguments work
equally well for disconnected spacetimes, however, and one can certainly  find states on disconnected spacetimes that are sufficiently entangled
across the various components that they have the Reeh--Schlieder 
property. For example, suppose $\omega_\Mb$ has the full Reeh--Schlieder property on a connected spacetime $\Mb$, and let
$O\in\OO(\Mb)$ have multiple components. Then the restriction
of $\omega_\Mb$ to $\Af(\Mb|_O)$ has the full Reeh--Schlieder property 
on this disconnected spacetime. In this situation the `behind the moon'
aspect of the Reeh--Schlieder property is brought into sharp relief:
the moon need not even be in the same spacetime component as
the experimenter!

\section{Summary}

In this paper, it has been shown that the split property and Reeh--Schlieder properties
can be established for locally covariant theories, using a common framework based on regular Cauchy pairs. The proofs of these properties
become quite streamlined and can be run simultaneously, thus implying the existence of standard split inclusions and permitting the 
results of analyses such as~\cite{DopLon:1984} to be used. Sufficient
conditions have been given for the existence of states obeying the split and Reeh--Schlieder properties in ultrastatic spacetimes, whereupon a spacetime deformation argument is used to export these properties to general
globally hyperbolic spacetimes. As a bonus, our methods also show that 
(in Minkowski space) the distal split property, in combination with the 
timeslice property and the assumption that state spaces obey local quasiequivalence, actually implies (in various specific senses) that the split condition holds.

\paragraph{Acknowledgement} I thank Klaus Fredenhagen for
asking a question about the status of the distal split property, which is
answered in Section~\ref{sect:distal}, and Ko Sanders for comments
on the text and for pointing out an error in a previous formulation of Lemma~\ref{lem:Cauchypairs}.

\end{document}